

\documentclass[11pt]{article}

\usepackage{graphicx,amssymb,amsmath}
\usepackage{epsfig}
\usepackage{epsf}
\usepackage{datetime}
\usepackage{latexsym}

\usepackage[small]{caption}
\usepackage{amsfonts}
\usepackage{fancyvrb}
\usepackage{fancyhdr}
\usepackage{amssymb}
\usepackage{amsthm}
\usepackage{graphics}
\usepackage{psfrag}
\usepackage{bbm}
\usepackage{hyperref}

\setlength{\oddsidemargin}{0in}
\setlength{\evensidemargin}{0in}
\setlength{\topmargin}{0in}
\setlength{\headheight}{0in}
\setlength{\headsep}{0in}
\setlength{\textwidth}{6.5in}
\setlength{\textheight}{9in}

\newtheorem{theorem}{Theorem}
\newtheorem{lemma}{Lemma}

\newtheorem{corollary}{Corollary}

\def\etal{{et~al.}}
\def\eg{{e.g.}}
\def\ie{{i.e.}}

\newcommand{\opt}{\textsf{OPT}}

\newcommand{\conv}{{\rm conv}}
\newcommand{\len}{{\rm len}}
\newcommand{\per}{{\rm per}}
\newcommand{\dist}{{\rm dist}}
\newcommand{\diam}{{\rm diam}}

\newcommand{\RR}{\mathbb{R}} 
\newcommand{\eps}{\varepsilon}

\def\G{\mathcal G}
\def\H{\mathcal H}
\def\R{\mathcal R}
\def\Q{\mathcal Q}

\newcommand{\old}[1]{{}}
\newcommand{\later}[1]{{}}

\title{\textsc{Constant-Factor Approximation for TSP with Disks}}

\author{Adrian Dumitrescu\thanks{Department of Computer Science,
University of Wisconsin--Milwaukee, WI, USA\@.
Email:~\texttt{dumitres@uwm.edu}.}
\qquad
Csaba D. T\'oth\thanks{Department of Mathematics, California State
  University, Northridge, Los Angeles, CA;
and Department of Computer Science, Tufts University, Medford, MA, USA\@.
Email:~\texttt{cdtoth@acm.org}.
Research by this author was supported in part by the NSF award CCF-1423615.}
}

\begin{document}

\maketitle

\begin{abstract}
We revisit the traveling salesman problem with neighborhoods
(TSPN) and present the first constant-ratio approximation for disks in
the plane: Given a set of $n$ disks in the plane, a TSP tour whose
length is at most $O(1)$ times the optimal can be computed in time that
is polynomial in $n$. Our result is the first constant-ratio approximation
for a class of planar convex bodies of arbitrary size and arbitrary intersections.

In order to achieve a $O(1)$-approximation, we reduce the traveling
salesman problem with disks, up to constant factors, to a minimum
weight hitting set problem in a geometric hypergraph.
The connection between TSPN and hitting sets in geometric hypergraphs,
established here, is likely to have future applications.

\medskip
\noindent\textbf{\small Keywords}:
Traveling salesman problem,
minimum weight hitting set,
approximation algorithm.
\end{abstract}

\section{Introduction} \label{sec:intro}

In the Euclidean Traveling Salesman Problem (ETSP), given a set
of points in the Euclidean space $\RR^d$, $d \geq 2$,
one seeks a shortest closed curve (a.k.a. \emph{tour}) that visits each point.
In the TSP \emph{with neighborhoods} (TSPN),  each point is replaced by a point-set,
called \emph{region} or \emph{neighborhood}, and the TSP tour must visit at least one
point in each region, \ie, it must intersect each region.
The oldest record that we could trace of this variant goes back to
Arkin and Hassin~\cite{AH94}. Since the Euclidean TSP is known to be NP-hard in $\RR^d$
for every $d \geq 2$ \cite{GGJ76,GJ79,Pa77}, TSPN is also NP-hard
for every $d \geq 2$. TSP is recognized as one of the corner-stone problems
in combinatorial optimization. Other related problems in geometric network optimization
can be found in the two (somewhat outdated) surveys by Mitchell~\cite{Mi00,Mi04}.

It is known that the Euclidean TSP admits a polynomial-time
approximation scheme in $\RR^d$, where $d=O(1)$, due to
classic results of Arora~\cite{Ar98} and Mitchell~\cite{Mi99}.
Subsequent running time improvements have been obtained by
Rao and Smith~\cite{RS98}; specifically, the running time
of their PTAS is $O(f(\eps) \, n \log{n})$, where $f(\eps)$ grows
exponentially in $1/\eps$. In contrast, TSPN is generally harder to approximate.
Typically, somewhat better approximations are available when the neighborhoods
are \emph{pairwise disjoint}, or \emph{fat}, or have \emph{comparable sizes}.
We briefly review some of the previous work concerning approximation
algorithms for TSPN.

\paragraph{Related work.}
Arkin and Hassin~\cite{AH94} gave constant-factor approximations for
translates of a connected region, and more generally, for neighborhoods
of pairwise \emph{parallel} diameters, where the ratio between the
longest and the shortest diameter is bounded by a constant.
Dumitrescu and Mitchell~\cite{DM03} extended the above result to
connected neighborhoods with comparable diameters.
Bodlaender~\etal~\cite{BFG+09} described a PTAS for TSPN with disjoint fat
neighborhoods of about the same size in $\RR^d$, where $d$ is constant
(this includes the case of disjoint unit disks in the plane).
Earlier Dumitrescu and Mitchell~\cite{DM03} proposed a PTAS for TSPN with
fat neighborhoods of about the same size and bounded depth in the plane,
where Spirkl~\cite{Sp14} recently reported and filled a gap; see also a follow-up
note in~\cite{Mi16a}.

Mata and Mitchell~\cite{MM95} gave a $O(\log{n})$-approximation for TSPN with $n$
connected and arbitrarily intersecting neighborhoods in the plane; see also~\cite{BE97}.
Elbassioni~\etal~\cite{EFS09} and Gudmundsson and Levcopoulos~\cite{GL99}
improved the running time of the algorithm.
The $O(\log{n})$-approximation relies on the following early result by
Levcopoulos and Lingas~\cite{LL84}:
Every (simple) rectilinear polygon $P$ with $n$ vertices, $r$ of which are reflex,
can be partitioned in $O(n \log{n})$ time into rectangles whose total perimeter
is $\log{r}$ times the perimeter of $P$.

Using an approximation algorithm due to Slavik~\cite{Sl97} for Euclidean group
TSP, de Berg~\etal~\cite{BGK+05} obtained constant-factor approximations
for disjoint fat convex regions (of arbitrary diameters) in the plane.
Subsequently, Elbassioni~\etal~\cite{EFS09} gave constant-factor
approximations for arbitrarily intersecting fat convex regions of comparable size.
Preliminary work by Mitchell~\cite{Mi07} gave a PTAS for planar regions of bounded depth
and arbitrary size, in particular for disjoint fat regions. Chan and Jiang~\cite{CJ14}
gave a PTAS for fat, weakly disjoint regions in metric spaces of constant doubling
dimension (combining an earlier QPTAS by Chan and Elbassioni~\cite{CE11} with
a PTAS for TSP in doubling metrics by Bartal~\etal~\cite{BGK12}).

Disks and balls are undoubtedly among the simplest neighborhood types~\cite{AH94,DM03,KS13}.
TSPN for disks is NP-hard, and it remains so for congruent disks, since when
the disk centers are fixed and the radius tends to zero, the problem reduces to TSP for points.
Regarding approximations, the case of congruent balls is relatively well understood:
Given a set of $n$ congruent (say, unit) balls in $\RR^d$, a TSP tour
whose length is at most $O(1)$ times the optimal can be computed in
polynomial time, when $d$ is constant~\cite{DT16}. However, for disks of
arbitrary radii and intersections, no constant-ratio approximation was known.
Some of the difficulties with disks of arbitrary radii in the plane
where uncovered in~\cite{DT15}.

Recent work of Dumitrescu and T\'oth~\cite{DT16} focused on unbounded neighborhoods,
such as lines or hyperplanes: They gave a constant-factor approximation for TSPN
with $n$ hyperplanes in $\RR^d$ in $O(n)$ time;
and a $O(\log^3 n)$-approximation for $n$ lines in $\RR^d$ in time polynomial in $n$,
where $d$ is constant.
In contrast, the current paper considers TSPN with arbitrary disks in $\RR^2$,
which requires quite different approximation techniques and new ideas.

\paragraph{Degree of approximation.}
Regarding the degree of approximation achievable, TSPN with arbitrary
neighborhoods is generally APX-hard~\cite{BGK+05,DO08,SS05}, and it remains
so even for segments of nearly the same length~\cite{EFS09}.
For disconnected neighborhoods, TSPN cannot be approximated within
any constant ratio unless $P=NP$~\cite{SS05}.
Further, approximating TSPN for (arbitrary) connected neighborhoods
in the plane within a factor smaller than 2 is NP-hard~\cite{SS05}.
Delineating the class of neighborhoods for which constant-factor
approximations are possible remains mysterious, at least at the moment.
It is conjectured that approximating TSPN for disconnected regions in
the plane within a $O(\log^{1/2} n)$ factor is intractable unless $P=NP$~\cite{SS05}.
Similarly, it is conjectured that approximating TSPN for connected
regions in $\RR^3$ within a $O(\log^{1/2} n)$ factor and for disconnected
regions in $\RR^3$ within a $O(\log^{2/3} n)$ factor~\cite{SS05} are probably intractable.

\paragraph{Our results.}
In this paper we present a polynomial-time (deterministic) algorithm that,
given a set of $n$ disks in $\RR^2$ (with arbitrary radii and intersections),
returns a TSP tour whose length is $O(1)$ times the optimal.

\begin{theorem}\label{thm:disk}
Given a set of $n$ disks in the plane, a TSP tour whose length is at
most $O(1)$ times the optimal can be computed in time polynomial in $n$.
\end{theorem}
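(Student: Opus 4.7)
The plan is to reduce TSP with disks, up to constant factors, to a weighted minimum hitting set problem in a geometric hypergraph whose hyperedges are induced by the input disks, and then to combine a constant-factor approximation for that hitting set problem with a Euclidean TSP PTAS on the computed hitting points. For the reduction, I would first define a vertex set $V$ consisting of the centers of cells of a multi-scale grid (or a quadtree) whose cells are refined to match the geometric scales of the input radii, and assign to each $p \in V$ a weight $w(p)$ proportional to the side length of its enclosing cell. Each disk $D_i$ defines a hyperedge $\{p \in V : p \in D_i\}$. Given any hitting set $H \subseteq V$, its points admit a TSP tour of length $O\!\left(\sum_{p \in H} w(p)\right)$ that visits every disk, constructed by connecting the grid stops scale by scale via short local detours and then invoking the Arora/Mitchell PTAS on the resulting point set.

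The crucial structural claim going in the other direction is that the optimal tour itself induces a hitting set of weight $O(\opt)$. To establish this I would partition the disks by radius into classes $\R_k = \{D : 2^k \le r(D) < 2^{k+1}\}$; for each class I would use the grid at resolution $\Theta(2^k)$ to extract candidate stops that hit all disks in $\R_k$ near the optimal tour, and charge their total weight against the portion of $\opt$ that interacts with disks at scale $2^k$. Since a tour that enters a disk of radius $r$ contributes $\Omega(r)$ to its length in a local neighborhood, summing across scales telescopes to $O(\opt)$. Together with the previous paragraph, this shows that the minimum weight hitting set has weight $\Theta(\opt)$ and that any $O(1)$-approximate hitting set yields a tour of length $O(\opt)$.

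For the final algorithmic step I would invoke a known constant-factor approximation for the minimum weight hitting set in hypergraphs induced by disks, which is available because the underlying set system has bounded VC-dimension and admits LP rounding with $\eps$-nets, or local search. The main obstacle, however, will be the tour-to-hitting-set direction: a single point on the optimal tour can lie inside many nested disks of vastly different radii, so a naive charging scheme badly double-counts, and, conversely, a tiny disk far from the bulk of the tour may be visited by a long thin detour whose cost must be amortized across scales. The technical heart of the argument will be a careful multi-scale assignment that pays for each disk only against the tour's local length at its own scale, using a geometric series to absorb the interaction between scales, and a packing argument to ensure that disks of comparable radius visited by a common tour segment do not overload that segment.
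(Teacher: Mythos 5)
Your high-level plan matches the paper's strategy --- a multi-scale grid, reduction to minimum weight hitting set (MWHS), and an $O(1)$-approximation for MWHS with disks --- but there is a genuine gap in \emph{both} directions of your reduction, traceable to one missing ingredient: the paper first builds a connected backbone graph $G_1$ (an $O(1)$-approximate tour for a greedily chosen maximal pairwise-disjoint subset of the disks, unioned with an enclosing square $R$) and anchors the quadtree to it --- a cell $Q$ is refined while $\diam(Q) \geq \max\bigl(\tfrac{r}{2n},\, \dist(Q,G_1)\bigr)$ --- rather than to a scale set by disk radii. In the hitting-set-to-tour direction, your claim that a hitting set $H$ admits a tour of length $O\bigl(\sum_{p\in H} w(p)\bigr)$, with $w(p)$ just a cell side length, cannot hold: two isolated hitting points far apart in tiny cells have TSP cost unbounded by their cell weights, and ``connecting scale by scale via short detours'' has no basis. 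The paper's refinement rule gives $\dist(Q,G_1)\leq 3\,\diam(Q)$ for every leaf cell, so each selected cell can be attached to the backbone for $O(\diam(Q))$, which is precisely what makes the weight a proxy for tour length. In the tour-to-hitting-set direction, which you flag as ``the main obstacle'' and ``the technical heart'' but do not resolve, the paper's independent-set preprocessing guarantees every disk not already hit by $G_1$ is within its own diameter of $G_1$; hence every leaf cell touching such a disk has $\diam(Q)\leq 2\,\diam(D)$, and combined with the bounded aspect ratio between adjacent leaf cells, a packing argument along any arc anchored at $G_1$ charges cell diameters to arc length directly, sidestepping the nested-scale double counting you correctly worry about. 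Your radius-class decomposition with per-scale grids does not, by itself, prevent that double counting.

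The MWHS black box you invoke is also not the right one: bounded VC-dimension and $\eps$-nets give only an $O(\log\opt)$-approximation (Br\"onnimann--Goodrich), and local search (Mustafa--Ray) handles only the \emph{unweighted} hitting set --- the paper explicitly notes it does not extend to weights. The tool actually needed is the quasi-uniform sampling $O(1)$-approximation of Chan, Grant, K\"onemann, and Sharpe for weighted geometric set cover/hitting set with linear union complexity, which covers disks and is a distinct, more recent result. Finally, there is a small but necessary gadget you omit: a disk can intersect a quadtree cell without containing its center, so the paper places $25$ ``sentinel'' points per cell and proves (using $\diam(Q)\leq 2\,\diam(D)$) that any touching disk must contain one of them; a hitting set on cell centers alone does not correctly translate cell-intersection into point-containment.
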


In their seminal paper on TSPN, Arkin and Hassin~\cite{AH94}
suggested disks as the most natural type of neighborhood---in
which the traveling salesman can meet each potential buyer
at some point close to the respective buyer's location.
Here the radius of each disk indicates how much each potential
buyer is willing to travel to meet the salesman.
While constant-ratio approximations for disks of the same (or comparable)
radius~\cite{AH94,DM03} and for disjoint disks of arbitrary radii~\cite{BGK+05}
have been obtained early on, the case of disks with arbitrary radii
and arbitrary intersections has remained open until now.

\section{Preliminaries}

We achieve a $O(1)$-approximation for TSP with disks by reducing the
problem, up to constant factors, to a minimum weight hitting set
problem in a geometric hypergraph, for which a constant-factor approximation algorithm
was found only recently~\cite{CGKS12}.

\paragraph{Hitting sets.} Hitting sets are defined in general in terms
of hypergraphs (i.e., set systems or range spaces).
A hypergraph is a pair $\G=(V,E)$ where $V$ is a finite vertex set and
$E \subset 2^V$ is a finite collection of subsets of $V$ (called edges).
In a geometric (primal) hypergraph, the vertex set $V$ is a finite set
of $n$ points in Euclidean space $\RR^d$, and all sets in $E$ are of the
form $V\cap Q$ where $Q$ is a certain geometric shape of bounded
description complexity, e.g., halfspace, ball, triangle, axis-aligned
rectangle, etc. Geometric hypergraphs often have nice properties
such as bounded VC-dimension or bounded union complexity; see~\cite{ERS05,PA95}.

A \emph{hitting set} in a hypergraph $\G=(V,E)$ is a subset of
vertices $H\subseteq V$ such that every hyperedge in $E$ contains some
point in $H$. The \emph{minimum hitting set} (MHS) problem asks for a
hitting set of minimum cardinality in a given hypergraph. The
\emph{minimum weight hitting set} (MWHS) problem asks for a hitting
set of minimum weight in a given hypergraph with vertex weights
$w~:~V\rightarrow \RR^+$.

Br\"onnimann and Goodrich~\cite{BG95} gave a $O(\log \opt)$-approximation
for MHS in geometric hypergraphs using LP-relaxations
and the fact that geometric hypergraphs have bounded VC-dimension.
Clarkson and Varadarajan~\cite{CV07} gave a $O(\log \log n)$-approximation for
\emph{some} geometric hypergraphs, by observing a connection between
hitting sets and the combinatorial complexity of the union of the
corresponding geometric objects. Mustafa and Ray~\cite{MR10} gave a
PTAS for MHS with disks and pseudo-disks in the plane using a
local search paradigm; see also~\cite{AES12,BGM+15}. However, this
method does not seem to extend to the weighted version (MWHS).

Varadarajan~\cite{Va09} gave a $O(\log \log n)$-approximation for MWHS,
extending the results from~\cite{CV07}. His approach was further
extended by Chan~\etal~\cite{CGKS12} who obtained a randomized
polynomial-time $O(1)$-approximation algorithm for MWHS in geometric
hypergraphs of linear union complexity, including geometric hypergraphs
defined by disks in $\RR^2$~\cite{KLPS86}; their algorithm can be
derandomized~\cite[Section 3]{CGKS12}. Specifically, we use the following
result due to Chan~\etal~\cite{CGKS12}.

\begin{theorem} {\rm \cite[Corollary~1.4 and Section 3]{CGKS12}} \label{thm:chan}
There is a polynomial-time (deterministic) $O(1)$-approximation algorithm
for the minimum weight hitting set problem for disks in $\RR^2$.
\end{theorem}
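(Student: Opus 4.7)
The plan is to reduce weighted hitting set for disks to LP rounding via small weighted $\epsilon$-nets, and to exploit the fact that the union of $n$ disks in the plane has linear combinatorial complexity. First, write the natural LP relaxation: variables $x_v \in [0,1]$ for each $v \in V$, constraints $\sum_{v \in e} x_v \geq 1$ for every hyperedge $e$, and objective $\min \sum_v w(v)\, x_v$. Let $\tau^*$ denote the LP optimum; it lower bounds the minimum weight hitting set and is computable in polynomial time, since the hypergraph is implicitly induced by a polynomial family of disks and hence admits a polynomial separation oracle.

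Next I would invoke the quantitative Br\"onnimann--Goodrich framework in its weighted form: if the range space induced by disks admits a weighted $\epsilon$-net of size $O(1/\epsilon)$ computable in polynomial time, then one obtains an $O(1)$-approximate integral hitting set of weight $O(\tau^*)$. Concretely, normalize $x_v/\tau^*$ to a probability measure on $V$; the fractional constraint forces every hyperedge to have measure at least $1/\tau^*$; and any $\epsilon$-net for $\epsilon = 1/2$, when taken with appropriate weight scaling, is automatically an integral hitting set, with total weight $O((1/\epsilon) \cdot \tau^*) = O(\tau^*)$.

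The technical heart, following~\cite{CGKS12} and refining Varadarajan's earlier $O(\log\log n)$ analysis, is the construction of such a weighted $\epsilon$-net for disks via \emph{quasi-uniform sampling}. One partitions $V$ into $O(\log n)$ weight classes in geometric progression, samples inside each class independently at a carefully chosen per-class rate, and then argues by a union bound over a canonical decomposition of the disk arrangement that every heavy disk contains at least one sampled point. It is precisely at this step that the linear union complexity of $n$ disks~\cite{KLPS86} is used: the canonical decomposition has only $O(n)$ regions, which eliminates the extra logarithmic loss present in the generic Clarkson--Varadarajan framework and delivers the target $O(1/\epsilon)$ net size.

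The main obstacle I expect is calibrating the per-class sampling probabilities so that the expected total weight of the sample is simultaneously $O(\tau^*)$ \emph{and} the union bound over all canonical regions succeeds with constant probability; missing the calibration by even a sub-logarithmic factor would reproduce only the weaker $O(\log\log n)$-approximation of Varadarajan. Once the probabilistic analysis is in place, derandomization is a relatively routine application of the method of conditional expectations against a pessimistic estimator that tracks, for each canonical region, the probability that the current partial sample still misses it; because disks have bounded VC-dimension and their arrangement admits only polynomially many canonical regions, this estimator is polynomial-time computable, and a greedy point-by-point choice yields the deterministic polynomial-time algorithm described in~\cite[Section~3]{CGKS12}.
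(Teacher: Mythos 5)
This statement is not proved in the paper at all: it is quoted verbatim as Theorem~\ref{thm:chan} with the citation~\cite[Corollary~1.4 and Section~3]{CGKS12} and used as a black box. There is therefore no in-paper proof to compare against, and a reasonable ``proof'' here is simply the attribution. Your attempt is instead a reconstruction of the argument inside~\cite{CGKS12}, so let me evaluate it on those terms.

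Your high-level route matches the cited work: solve the natural LP, round it by randomized sampling guided by the fractional solution, exploit the linear union complexity of disks~\cite{KLPS86} to make the sampling tight, and derandomize by conditional expectations. Two of the specific steps, however, are off in a way that would collapse the argument if taken literally. First, the Br\"onnimann--Goodrich framework you invoke is an iterative-reweighting method for the \emph{unweighted} hitting-set problem, and it does not have a straightforward ``weighted form''; what~\cite{CGKS12} actually does is round the LP directly. Second, and more importantly, passing from the LP solution to a ``weighted $\eps$-net of size $O(1/\eps)$'' controls the \emph{cardinality} of the net, not its \emph{weight}; a size-$O(1/\eps)$ net can still be arbitrarily heavy. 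The clause ``when taken with appropriate weight scaling'' is precisely where the difficulty is hidden, and the arithmetic $O((1/\eps)\cdot\tau^*)$ with $\eps=1/2$ does not correspond to any correct accounting (the hyperedge measure lower bound is $1/\tau^*$, not $1/2$, and $x_v/\tau^*$ is not a probability measure when weights are nontrivial). The actual mechanism in~\cite{CGKS12} is \emph{quasi-uniform sampling}: one samples each vertex with probability $O(x_v)$ (bounded per-vertex multiplicity relative to the LP value), which is what bounds the expected weight by $O(\tau^*)$, and the union-complexity argument is what shows such a bounded-probability sample can still hit every range. You do name quasi-uniform sampling later, but framing it as a way to produce a small $\eps$-net, rather than as the device that caps per-vertex inclusion probabilities and hence weight, misses the reason it is needed in the weighted setting. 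Fixing that framing would make your sketch a faithful (if compressed) account of the cited proof.
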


\paragraph{Definitions.}
Let $\R$ be a set of regions (neighborhoods) in $\RR^2$.
An optimal TSP tour for $\R$, denoted by $\opt(\R)$, is a shortest
closed curve in the plane that intersects every region in~$\R$;
when $\R$ is clear from the context, $\opt(\R)$ and $\opt$ are used
interchangeably.

The Euclidean length of a curve $\gamma$ is denoted by $\len(\gamma)$.
Similarly, the total (Euclidean) length of the edges
of a geometric graph $G$ is denoted by $\len(G)$. The perimeter of a
polygon $P$ is denoted by $\per(P)$; the boundary and the interior
of a region $R$ are denoted by $\partial R$ and $R^{\circ}$, respectively;
the convex hull of a planar set $S$ is denoted by $\conv(S)$.

The distance between two planar point sets $S_1,S_2\subset \RR^2$, is
$\dist(S_1,S_2)=\inf\{\dist(s_1,s_2): s_1\in S_1, s_2\in S_2\}$.
The distance between a point set $S_1$ and a geometric graph $G$ is
defined as $\dist(S_1,G):=\dist(S_1,S_G)$,
where $S_G$ is the set of all points at vertices and on the edges of $G$.

\paragraph{Algorithm Outline.}
Given a set $S$ of $n$ disks in Euclidean plane, we construct a
connected geometric graph $G$ that intersects every disk
in $S$ and such that $\len(G) = O(\len(\opt))$. An Eulerian tour of
the multi-graph obtained by doubling each edge of $G$ visits each disk
and its length is $2 \, \len(G) = O(\len(\opt))$, as desired.

The graph $G$ is the union of three geometric graphs,
$G_1$, $G_2$ and $G_3$. The graph $G_1$ is a $O(1)$-approximation of
an optimal tour for a maximal subset of pairwise disjoint disks in $S$;
this step is based on earlier results~\cite{BGK+05,EFS09} (Section~\ref{sec:pre}).
The graph $G_2$ connects $G_1$ to nearby disks that are guaranteed to be at distance
at most $\len(\opt)/n$ from $G_1$. The graph $G_3$ connects any remaining disks to $G_1$;
this step is based on recent results on minimum weight hitting sets due to
Chan~\etal~\cite{CGKS12} (Section~\ref{sec:hitting}).

The interface between TSP and the hitting set problem is established by
a quadtree subdivision~\cite[Ch.~14]{BCK+08}.
Previously, Arora~\cite{Ar98} and Mitchell~\cite{Mi10} used quadtrees for approximating
Euclidean TSP and TSP with disjoint neighborhoods, respectively. The quadtree variety
that we need, a so-called \emph{stratified grid}, was introduced by Mitchell~\cite{Mi10}
for certain orthogonal polygons. Here we define stratified grids in a more general setting,
for arbitrary geometric graphs (Section~\ref{sec:stratify}).

\section{Preprocessing}\label{sec:pre}

Let $S$ be a set of $n$ disks in the plane. The algorithm first
constructs the graphs $G_1$ and $G_2$ as follows (Fig.~\ref{fig:overview}).

\begin{enumerate} \itemsep 0pt
\item Select an \emph{independent subset} $I$, $I\subseteq S$, of pairwise
  disjoint disks by the following greedy algorithm: Set $I:=\emptyset$.
  Consider the disks in $S$ in increasing order of radius (with ties
  broken arbitrarily), and successively place a disk $D\in S$
  into $I$ if it is disjoint from all previous disks in $I$.
\item Compute a constant-factor approximate TSP tour $\xi_0$ for $I$
using the algorithm in~\cite{BGK+05} or~\cite{EFS09}.
(A PTAS for disjoint disks in the plane is available~\cite{DM03,Sp14}
but not needed here.) It is clear that $\len(\xi_0) = O(\len(\opt))$.
\item Let $R$ be a minimum axis-parallel square such that $\conv(R)$ intersects every
  disk in $S$ (\ie, every disk intersects $\partial R$ or is contained in $R^{\circ}$).
  The square $R$ is determined by up to $3$ disks in $S$, thus $R$ can be trivially computed
  in $O(n^4)$ time: there are $O(n^3)$ squares that pairs and triples define,
  and each can be checked in $O(n)$ time as to whether it intersects all disks.
  Alternatively, finding $R$ is an LP-type problem of combinatorial dimension $3$
  that can be solved in $O(n)$ time~\cite{MSW96}[Section~5];
  see also~\cite{GMRS08} for a modern treatment of LP-type problems and violator spaces.
  Let $r$ denote its side-length of $R$; obviously, we have $r\leq \len(\opt)$.
\item Let $G_1$ be the union of $\xi_0$, $R$, and a shortest line segment connecting
  $\xi_0$ and $R$ (if disjoint).
\end{enumerate}

The graph $G_1$ intersects all disks in $I$ and possibly some disks in
$S\setminus I$.
Our primary interest is in the disks in $S$ that are disjoint from $G_1$.

\begin{figure}[htbp]
\centering
\includegraphics[width=0.99\textwidth]{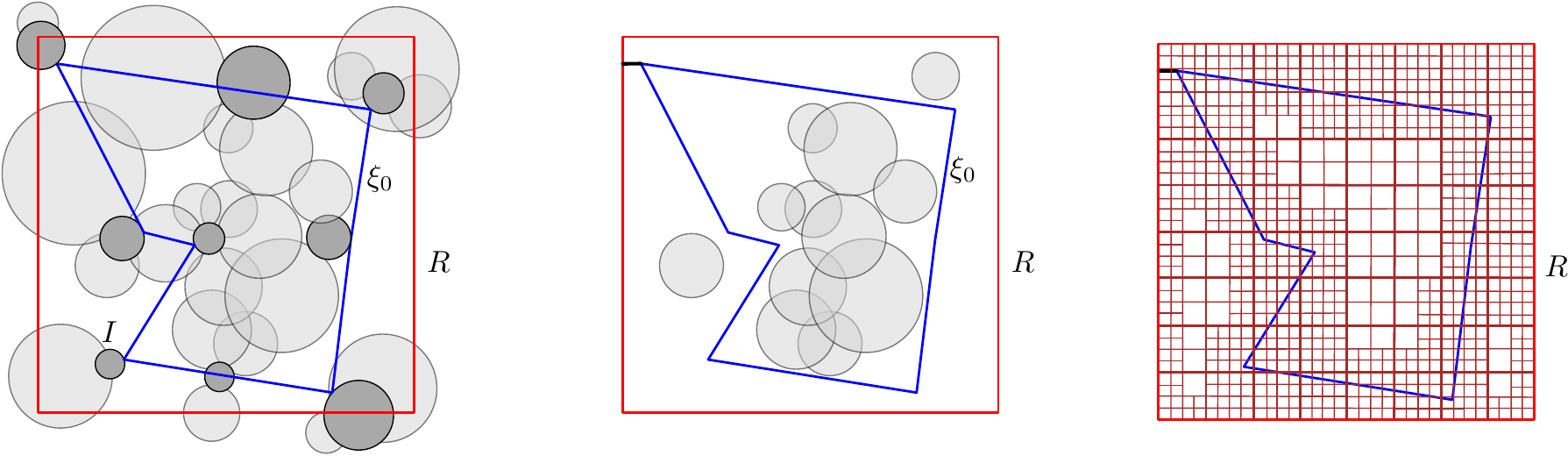}
\caption{Left: a set of disks in the plane; the independent set $I$
  selected by a greedy algorithm is highlighted; a TSP tour $\xi_0$
  for $I$, and a minimum square $R$ intersecting all disks are shown.
Middle: $G_1$ is the union of $\xi_0$, $R$, and a shortest line segment connecting them;
only the disks in $S_2\cup S_3$ that do not intersect $G_1$ are shown.
Right: a stratified grid for $R$ and $G_1$.}
\label{fig:overview}
\end{figure}

\begin{lemma}\label{lem:pre}
For every disk $D\in S$, we have $\dist(D,G_1)\leq \diam(D)$.
\end{lemma}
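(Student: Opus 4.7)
The plan is to split into two cases depending on whether $D$ belongs to the independent set $I$ or not, and to exploit the fact that the greedy procedure processes disks in increasing order of radius.

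If $D\in I$, then the tour $\xi_0$ visits $D$ by construction, so $\xi_0\cap D\neq\emptyset$, hence $\dist(D,G_1)=0\leq \diam(D)$ since $\xi_0\subseteq G_1$.

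Now suppose $D\notin I$. By the greedy rule, $D$ was rejected because it intersected some disk $D'$ that had already been placed into $I$. Since disks are considered in non-decreasing order of radius, $D'$ was processed before $D$, and therefore the radius of $D'$ is at most the radius of $D$; equivalently, $\diam(D')\leq \diam(D)$. I would then pick two witness points: a point $p\in \xi_0\cap D'$ (which exists because $D'\in I$ is visited by $\xi_0$) and a point $q\in D\cap D'$ (which exists by the choice of $D'$). Both $p$ and $q$ lie inside $D'$, so $|pq|\leq \diam(D')$. This yields
\[
\dist(D,G_1) \;\leq\; \dist(D,\xi_0) \;\leq\; |pq| \;\leq\; \diam(D') \;\leq\; \diam(D),
\]
which is the desired bound.

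There is no real obstacle here; the argument is essentially a one-liner once the radius-ordering of the greedy algorithm is used. The role of the square $R$ and of the connector segment in $G_1$ is not needed for this particular lemma, but they will matter elsewhere in the algorithm. The only subtlety worth noting explicitly is that the increasing-radius ordering is essential: without it, the blocker $D'$ could be much larger than $D$, and one would only get $\dist(D,G_1)\leq \diam(D')$, which is too weak to bound things in terms of $\diam(D)$.
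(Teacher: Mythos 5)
Your proof is correct and follows essentially the same route as the paper: both arguments reduce to the observation that a disk $D\notin I$ is blocked by some $D'\in I$ of equal or smaller radius, and that $\xi_0\subseteq G_1$ meets $D'$, so $\dist(D,G_1)\leq\diam(D')\leq\diam(D)$. Your explicit choice of witness points $p$ and $q$ inside $D'$ just spells out the paper's one-line triangle-inequality step.
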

\begin{proof}
Let $D\in S$. If $G_1$ intersects $D$, then $\dist(D,G_1)=0$, and the
claim is trivial. Assume that $D$ is disjoint from $G_1$. Since
$\xi_0$ intersects every disk in $I$ and $\xi_0\subseteq G_1$, we have
$D\in S\setminus I$. By the greedy choice of $I\subseteq S$, the disk
$D$ intersects some disk $D'\in I$ of equal or smaller radius, where
$G_1$ intersects $D'$. Consequently, $\dist(D,G_1)\leq \diam(D')\leq \diam(D)$.
\end{proof}

\paragraph{Connecting nearby disks to $G_1$.}
We partition $S$ into three subsets:
let $S_1$ be the set of disks in $S$ that intersect $G_1$;
let $S_2$ be the set of disks $D\in S\setminus S_1$ such that
$\dist(D,G_1)\leq \frac{r}{n}$; and let $S_3=S\setminus (S_1\cup S_2)$.
Let $G_2$ be a graph that consists of $|S_2|$ line segments:
specifically for every $D\in S_2$, $G_2$ contains a shortest segment
connecting $D$ and $G_1$. Then $\len(G_2)=\sum_{D\in S_2}
\dist(D,G_1)\leq |S_2|\cdot \frac{r}{n}\leq r\leq \len(\opt)$.
By construction, we have
\begin{equation}\label{eq:s3}
\dist(D,G_1)> \frac{r}{n} \mbox{ \rm for every }D\in S_3.
\end{equation}

By Lemma~\ref{lem:pre} and inequality~\eqref{eq:s3} we have

\begin{corollary}\label{cor:diam}
For every disk $D\in S_3$, we have $\diam(D) > \frac{r}{n}$.
\end{corollary}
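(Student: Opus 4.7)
The plan is to simply chain the two facts already established just before the corollary statement: Lemma~\ref{lem:pre} gives a universal upper bound $\dist(D,G_1)\leq \diam(D)$ valid for every $D\in S$, while inequality~\eqref{eq:s3} gives a strict lower bound $\dist(D,G_1)>r/n$ that holds by construction for every disk in $S_3$ (since $S_3=S\setminus(S_1\cup S_2)$ comprises exactly those disks neither intersecting $G_1$ nor within distance $r/n$ of it).

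Concretely, I would fix an arbitrary $D\in S_3$, invoke \eqref{eq:s3} to bound $\dist(D,G_1)$ from below by $r/n$, and then apply Lemma~\ref{lem:pre} to bound the same quantity from above by $\diam(D)$. Combining the two gives $\diam(D)>r/n$, which is the claimed conclusion. There is no obstacle here; the corollary is purely a syntactic consequence of Lemma~\ref{lem:pre} and the definition of $S_3$, and requires no additional geometric insight beyond the transitivity of the two inequalities.
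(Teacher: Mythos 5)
Your proposal is correct and matches the paper's own derivation exactly: the corollary is stated as an immediate consequence of Lemma~\ref{lem:pre} and inequality~\eqref{eq:s3}, and chaining $\frac{r}{n}<\dist(D,G_1)\leq\diam(D)$ is precisely the intended argument. Nothing further is needed.
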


In the next section, we show how to find a geometric graph $G_3$
such that $G_3$ intersects every disk in $S_3$ and $G_1\cup G_3$ is
connected (note, however, that $G_3$ need not be connected).

\section{Stratified Grids}\label{sec:stratify}

Recall that we have a geometric graph $G_1$, and a set
$S_3$ of at most $n$ disks in the interior of an axis-aligned square $R$
of side-length $r$, $r \leq \len(\opt)$, satisfying \eqref{eq:s3}.
Let $\opt(S_3,G_1)$ denote a geometric graph $\Gamma$ of
minimum length such that $G_1\cup \Gamma$ is connected and
intersects every disk in $S_3$. Note that $\len(\opt(S_3,G_1))
\leq \len(\opt(S_3)) \leq \len(\opt)$, for every $G_1$.

In Sections~\ref{sec:hitting} and~\ref{sec:chan},
we use hitting sets to compute a $O(1)$-approximation of $\opt(S_3,G_1)$.
Similarly to a quadtree decomposition, we recursively construct a subdivision of $R$
into squares of side-lengths $r/2^i$, for $i=0,1,\ldots ,\lceil \log n\rceil$.
Refer to Fig.~\ref{fig:overview}\,(right).

Previously, Mitchell~\cite{Mi10} used a similar quadtree decomposition for TSPN with
disjoint regions in the plane, coined the term ``stratified grid,'' and derived several
basic properties of quadtrees that we rederive here. Specifically, he proved analogues
of Lemmas~\ref{lem:diam} and~\ref{lem:curve} for the problem studied in~\cite{Mi10}.
However, Mitchell used stratified grids only for special types of orthogonal polygons,
called \emph{histograms}~\cite{L87}; here we generalize this tool to arbitrary geometric graphs.

The following algorithm subdivides a square $Q$ unless it is too small
(\ie, $\diam(Q)<\frac{r}{2n}$)
or it is relatively far from $G_1$ (\ie, $\diam(Q)<\dist(Q,G_1)$).
\begin{quote}{\bf Stratify$(R,G_1)$.}
Let $L$ be a FIFO queue and $\Q$ be a set of axis-aligned squares.
Set $L=(R)$ and $\Q=\emptyset$.
Repeat the following while $L$ is nonempty.
Set $Q\leftarrow \texttt{dequeue}(L)$.
If $\diam(Q)\geq \max(\frac{r}{2n},\dist(Q,G_1))$, then subdivide $Q$
into four congruent axis-aligned squares, and enqueue them onto $L$.
Otherwise, let $\Q \leftarrow \Q \cup \{Q\}$. Return $\Q$.
\end{quote}

It is worth noting that $\Q$ does not directly depend on the disks in $S_3$,
but only indirectly, via $G_1$.
By construction, the squares in $\Q$ are interior-disjoint, and every
square in $\Q$ has diameter at least $r/(4n)$.
Consequently, the number of squares in $\Q$ is $O(n^2)$.
Thus the algorithm {\bf Stratify}$(R,G_1)$ runs in polynomial
time in $n$, since $O(n^2)$ squares are enqueued onto $L$,
and $\dist(Q,G_1)$ can be computed in polynomial time for all $Q\in L$.
We show that the squares in $\Q$ have a property similar to the disks in $S_3$
(cf. Lemma~\ref{lem:pre}): only larger squares can be farther from $G_1$.

\begin{lemma}\label{lem:diam}
For every square $Q\in \Q$, we have $\dist(Q,G_1)\leq 3\ \diam(Q)$.
\end{lemma}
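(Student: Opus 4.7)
The plan is to exploit the parent--child relationship induced by the subdivision algorithm. Every square $Q \in \Q$ arises as one of four congruent children of some square $Q'$ that the algorithm chose to subdivide. I would first verify that every $Q\in \Q$ indeed has such a parent: the root $R$ is itself subdivided, because $R \subseteq G_1$ forces $\dist(R,G_1)=0$, so the criterion $\diam(R)\geq \max(r/(2n),\dist(R,G_1))$ holds trivially. Hence $Q\neq R$ and a parent $Q'$ is well defined.

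Since $Q'$ was subdivided, the subdivision condition yields $\diam(Q')\geq \dist(Q',G_1)$. The geometry of the quadtree splitting gives $\diam(Q')=2\diam(Q)$, because the side length of $Q$ is half that of $Q'$. This is the only structural input I need from the algorithm.

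The rest is a triangle inequality argument, with one subtlety to keep the constant at $3$ rather than $4$. Pick $p^\ast\in Q'$ and $g^\ast\in G_1$ realizing $\dist(Q',G_1)=\|p^\ast-g^\ast\|$. For any $q\in Q$ I have $\dist(q,G_1)\leq \|q-p^\ast\|+\|p^\ast-g^\ast\|$. The key geometric observation is that among the four children of $Q'$, every pair shares at least a corner, so $Q$ and the child containing $p^\ast$ have a common point $c$; thus $\dist(p^\ast,Q)\leq \dist(p^\ast,c)\leq \diam(Q)$ rather than the weaker $\diam(Q')$. Combining,
\[
\dist(Q,G_1)\ \leq\ \dist(p^\ast,Q)+\dist(Q',G_1)\ \leq\ \diam(Q)+\diam(Q')\ =\ 3\,\diam(Q),
\]
which is the claimed bound.

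The main obstacle is precisely this last refinement: naively one would bound $\dist(p^\ast,Q)$ by $\diam(Q')=2\diam(Q)$ and obtain a factor of $4$. Using the shared-corner property of sibling quadtree cells is what shaves the constant down to $3$, and this constant matters for the downstream arguments that invoke Lemma~\ref{lem:diam}.
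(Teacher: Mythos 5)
Your proof is correct and follows essentially the same route as the paper: identify the parent $Q'$ with $\diam(Q')=2\diam(Q)$, use the subdivision criterion to get $\dist(Q',G_1)\leq\diam(Q')$, and apply the triangle inequality via the observation that every point of $Q'$ is within $\diam(Q)$ of $Q$. The paper states that last geometric fact without justification, whereas you supply the (correct) shared-corner argument for it, but the structure of the proof is the same.
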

\begin{proof}
Put $q=\diam(Q)$. Recall that $Q$ is obtained by subdividing a square $Q'$,
$Q\subset Q'$, with $\diam(Q')=2q$. Since $Q'$ is subdivided by the
algorithm, we have $\diam(Q') \geq \max(\frac{r}{2n},\dist(Q',G_1))$.
Since $\dist(p',Q) \leq q$ for every point $p' \in Q'$,
the triangle inequality yields $\dist(Q,G_1) \leq 3q$.
\end{proof}

\begin{figure}[htbp]
\centering
\includegraphics[width=0.85\textwidth]{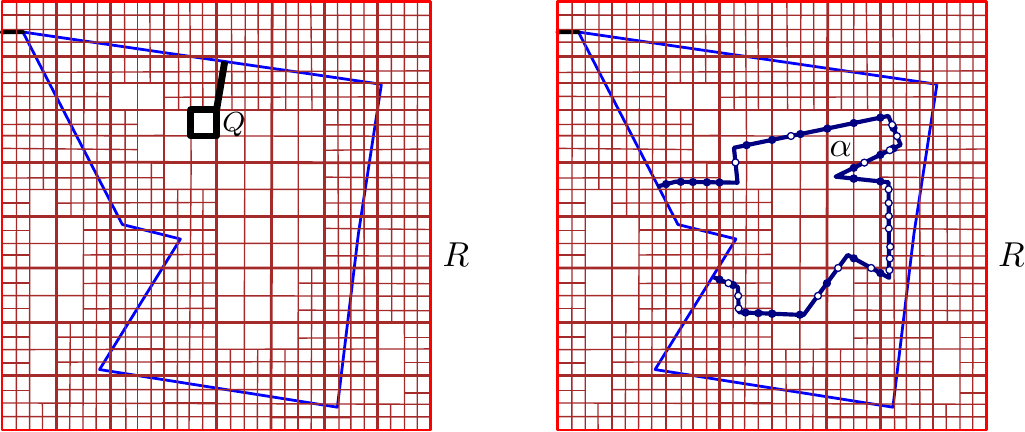}
\caption{Left: a square $Q$ of the stratified grid, and the graph $\gamma(Q)$.
Right: a polygonal curve $\alpha$; the intersections of $\alpha$ with
horizontal (resp., vertical) edges of the stratified grid are marked with empty
(resp., full) dots.}
\label{fig:gamma}
\end{figure}

For every square $Q\in \Q$, we define a graph $\gamma(Q)$ that consists
of the boundary of $Q$ and a shortest line segment from $Q$ to $G_1$;
see Fig.~\ref{fig:gamma}\,(left). By Lemma~\ref{lem:diam}, we have
$\len(\gamma(Q))\leq (3+2\sqrt{2}) \, \diam(Q)$; on the other hand,
$\len(\gamma(Q)) \geq \per(Q) =2\sqrt{2} \, \diam(Q)$,
and so we have the following.
\begin{corollary}\label{cor:gamma}
For every $Q\in \Q$, we have $\len(\gamma(Q))=\Theta(\diam(Q))$.
\end{corollary}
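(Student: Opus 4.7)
The plan is very short because most of the work is already done in the paragraph preceding the corollary; I just need to combine two observations.

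First I would record the exact side-length of $Q$ in terms of its diameter. Since $Q$ is an axis-aligned square, if $s$ denotes its side length then $\diam(Q) = s\sqrt{2}$ and $\per(Q) = 4s = 2\sqrt{2}\,\diam(Q)$. This immediately gives the lower bound
\[
\len(\gamma(Q)) \;\geq\; \per(Q) \;=\; 2\sqrt{2}\,\diam(Q),
\]
since $\gamma(Q)$ contains the boundary of $Q$ as a subgraph.

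For the upper bound, $\gamma(Q)$ is, by definition, the union of $\partial Q$ and a shortest segment from $Q$ to $G_1$, so
\[
\len(\gamma(Q)) \;\leq\; \per(Q) + \dist(Q,G_1).
\]
I would then invoke Lemma~\ref{lem:diam}, which gives $\dist(Q,G_1)\leq 3\,\diam(Q)$, and combine with $\per(Q)=2\sqrt{2}\,\diam(Q)$ to obtain
\[
\len(\gamma(Q)) \;\leq\; (3 + 2\sqrt{2})\,\diam(Q).
\]
The two bounds together yield $\len(\gamma(Q)) = \Theta(\diam(Q))$.

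There is no real obstacle here: the only substantive input is Lemma~\ref{lem:diam}, and the rest is a straightforward triangle-inequality-style decomposition of the graph $\gamma(Q)$ into its two constituent pieces (the boundary of $Q$ and the connector segment to $G_1$). The numerical constants $2\sqrt{2}$ and $3+2\sqrt{2}$ matter only in that they are positive and finite; the statement is purely a $\Theta$-bound.
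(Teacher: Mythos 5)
Your proposal is correct and follows exactly the paper's reasoning: the lower bound comes from $\len(\gamma(Q))\geq \per(Q)=2\sqrt{2}\,\diam(Q)$, and the upper bound $\len(\gamma(Q))\leq \per(Q)+\dist(Q,G_1)\leq(3+2\sqrt{2})\,\diam(Q)$ follows from Lemma~\ref{lem:diam}. Nothing is missing.
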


The following observation is crucial for reducing the problem of
approximating $\opt(S_3,G_1)$ to a minimum weight hitting set problem.

\begin{lemma}\label{lem:gamma}
  If a square $Q\in \Q$ intersects a disk $D\in S_3$, then
\begin{enumerate}\itemsep -2pt
\item[{\rm (i)}]  $\diam(Q) \leq 2\, \diam(D)$, and
\item[{\rm (ii)}] $D$ intersects the boundary of $Q$ (and the graph $\gamma(Q)$ in particular).
\end{enumerate}
\end{lemma}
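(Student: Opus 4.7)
The plan is to handle both conclusions in parallel by splitting on the two possible reasons why \textbf{Stratify} refused to subdivide~$Q$. Since $Q\in\Q$, either (a) $\diam(Q)<\frac{r}{2n}$, or (b) $\diam(Q)<\dist(Q,G_1)$. The whole argument will be organized around this dichotomy, using Lemma~\ref{lem:pre} and Corollary~\ref{cor:diam} as the only external inputs.

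For part~(i), in case~(a) a one-line comparison with Corollary~\ref{cor:diam} already gives $\diam(D)>\frac{r}{n}>2\diam(Q)$, which is stronger than needed. In case~(b), I would pick a witness point $p\in D\cap Q$ together with a point of $D$ closest to $G_1$, and chain them through the triangle inequality via $\diam(D)$ to obtain $\dist(Q,G_1)\le \dist(D,G_1)+\diam(D)$. Applying Lemma~\ref{lem:pre} then gives $\dist(Q,G_1)\le 2\diam(D)$, and the case hypothesis $\diam(Q)<\dist(Q,G_1)$ yields $\diam(Q)<2\diam(D)$.

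For part~(ii), I would argue by contradiction, assuming $D\subseteq Q^\circ$. Since a disk contained in a square has diameter at most that of the square, this forces $\diam(D)\le \diam(Q)$. In case~(a), this collides with $\diam(D)>\frac{r}{n}>2\diam(Q)$ from Corollary~\ref{cor:diam}. In case~(b), the inclusion $D\subseteq Q$ gives $\dist(D,G_1)\ge \dist(Q,G_1)>\diam(Q)\ge \diam(D)$, contradicting Lemma~\ref{lem:pre}. Hence $D$ meets $Q$ but is not contained in $Q^\circ$, and connectedness of $D$ forces it to cross $\partial Q\subseteq \gamma(Q)$.

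The argument is essentially bookkeeping around the dichotomy together with the two cited inputs, so I do not expect a real obstacle. The only point to handle with care is the triangle-inequality chain in case~(b) of part~(i): one must not confuse $\dist(D,G_1)$ with $\dist(Q,G_1)$, and it is crucial to use that $D$ and $Q$ share at least one point in order to bridge them through $\diam(D)$.
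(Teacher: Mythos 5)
Your proposal is correct and follows essentially the same route as the paper's proof: the same dichotomy on why \textbf{Stratify} left $Q$ unsubdivided, the same use of Corollary~\ref{cor:diam} in the small-square case, the same triangle-inequality chain $\dist(Q,G_1)\le\dist(D,G_1)+\diam(D)$ combined with Lemma~\ref{lem:pre} for part~(i), and the same contradiction via $\dist(Q,G_1)\le\dist(D,G_1)$ and a diameter comparison for part~(ii). No gaps.
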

\begin{proof}
(i)~Since $Q\in \Q$, Algorithm {\bf Stratify}$(R,G_1) $ did not subdivide $Q$, and so
we have $\diam(Q)< \frac{r}{2n}$ or $\diam(Q)<\dist(Q,G_1)$.
If $\diam(Q)< \frac{r}{2n}$, then Corollary~\ref{cor:diam} yields
$$ \diam(Q)< \frac{r}{2n}< \frac{r}{n} < \diam(D) < 2\, \diam(D). $$

If $\diam(Q)<\dist(Q,G_1)$, then $\dist(Q,G_1) \leq \dist(D,G_1) + \diam(D)$
follows from the intersection condition and the triangle inequality.
Consequently,
$$ \diam(Q) < \dist(Q,G_1) \leq \dist(D,G_1) + \diam(D) \leq 2\, \diam(D), $$
where the last inequality holds by Lemma~\ref{lem:pre}.

\smallskip
(ii)~Suppose, to the contrary, that the boundary of $Q$ is disjoint from
$D$, hence $D$ lies in the interior of $Q$. This immediately implies
\begin{equation}\label{eq:interior}
\dist(Q,G_1)\leq \dist(D,G_1).
\end{equation}

Since $Q\in \Q$, Algorithm {\bf Stratify}$(R,G_1) $ did not subdivide $Q$, and so
we have $\diam(Q)< \frac{r}{2n}$ or $\diam(Q)<\dist(Q,G_1)$.
If $\diam(Q)< \frac{r}{2n}$, Corollary~\ref{cor:diam} yields
$\diam(Q)<\frac{r}{2n}< \frac{r}{n}<\diam(D)$.
If $\diam(Q)<\dist(Q,G_1)$, then the combination of \eqref{eq:interior} and
Lemma~\ref{lem:pre} yield
$$ \diam(Q) < \dist(Q,G_1) \leq \dist(D,G_1) < \diam(D). $$

In both cases, we have shown that $\diam(Q)< \diam(D)$. Therefore $D$ cannot lie in
the interior of $Q$, which contradicts the assumption.
\end{proof}

Recall that the squares in $\Q$ can only intersect at common boundary points;
we call such squares \emph{adjacent}.

\begin{lemma}\label{lem:adjacent}
If two squares $Q_1,Q_2\in \Q$ are adjacent and $\diam(Q_1) \leq \diam(Q_2)$, then
$$ \frac{1}{2}\diam(Q_2) \leq \diam(Q_1)\leq \diam(Q_2) \leq 2\ \diam(Q_1). $$
\end{lemma}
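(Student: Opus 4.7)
The plan is to prove the upper bound $\diam(Q_2)\le 2\diam(Q_1)$ by contradiction, since the other two inequalities in the chain are either the hypothesis or a restatement of it. First I would observe that all squares produced by \textbf{Stratify} have side-length $r/2^i$ for some integer $i\ge 0$, so the diameters of $Q_1$ and $Q_2$ differ by a power of $2$. Thus, negating the desired inequality gives $\diam(Q_2)\ge 4\,\diam(Q_1)$, and it suffices to derive a contradiction from this strengthened assumption.

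Let $Q_1'$ be the parent square of $Q_1$ in the recursion, so $\diam(Q_1')=2\,\diam(Q_1)$. Because the algorithm did subdivide $Q_1'$, we have $\diam(Q_1')\ge \max\!\bigl(\tfrac{r}{2n},\,\dist(Q_1',G_1)\bigr)$. Next I would carry out a short geometric case analysis on the point $p\in Q_1\cap Q_2$: if $p$ lay in the interior of $Q_1'$, then $Q_2$ would have to sit inside $Q_1'$ (by the tiling property of the quadtree leaves in $\Q$), forcing $Q_2$ to be either a sibling of $Q_1$ or a descendant of a sibling, and in either case $\diam(Q_2)\le \diam(Q_1)$, contradicting the standing assumption. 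Therefore $p\in\partial Q_1'$, and in particular $Q_2\cap Q_1'\ne\emptyset$.

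Now I would apply the triangle inequality using $p$ as a pivot between $Q_2$ and $Q_1'$: for a nearest pair $(q_1',g^\ast)\in Q_1'\times G_1$,
\[
\dist(Q_2,G_1)\le |p-g^\ast|\le |p-q_1'|+|q_1'-g^\ast|\le \diam(Q_1')+\dist(Q_1',G_1)\le 2\,\diam(Q_1')=4\,\diam(Q_1)\le \diam(Q_2).
\]
Simultaneously, the lower-bound $\diam(Q_1)\ge r/(4n)$ on any square in $\Q$ (inherited from the $r/(2n)$ threshold one recursion level up) gives $\diam(Q_2)\ge 4\,\diam(Q_1)\ge r/n\ge r/(2n)$. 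Hence $\diam(Q_2)\ge \max\!\bigl(\tfrac{r}{2n},\dist(Q_2,G_1)\bigr)$, so \textbf{Stratify} would have subdivided $Q_2$, contradicting $Q_2\in\Q$.

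The only subtle point is the geometric case split about where $Q_1$ can meet $Q_2$ inside the parent $Q_1'$; once we know the contact lives on $\partial Q_1'$, everything reduces to the triangle inequality plus the subdivision rule for $Q_1'$, and the argument is symmetric enough that neither the location of $G_1$ nor the exact placement of $Q_2$ needs to be examined further.
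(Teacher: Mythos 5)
Your proof is correct and follows essentially the same route as the paper's: both arguments play the subdivision criterion for an ancestor of $Q_1$ (which forces that ancestor, and hence the adjacent $Q_2$, to be close to $G_1$ via the triangle inequality) against the fact that \textbf{Stratify} declined to subdivide $Q_2$. You package this as a contradiction starting from $\diam(Q_2)\ge 4\,\diam(Q_1)$ whereas the paper argues directly through Lemma~\ref{lem:diam}, and your case analysis on whether the contact point lies in the interior of $Q_1'$ is superfluous (all that is needed is $Q_1'\cap Q_2\supseteq Q_1\cap Q_2\neq\emptyset$, which is immediate), but these differences are cosmetic.
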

\begin{proof}
  If $\diam(Q_1)=\diam(Q_2)$, the inequalities are satisfied.
  We may thus assume that $\diam(Q_1)<\diam(Q_2)=q$.
  Then Algorithm {\bf Stratify}$(R,G_1)$
subdivided a square $Q_1'$ such that $Q_1\subset Q_1'$,
$\diam(Q_1') \leq q$, and $Q_1'\cap Q_2\neq \emptyset$.
The algorithm subdivided $Q_1'$ but did not subdivide $Q_2$. This implies
$$ q \geq \max\left(\frac{r}{2n},\dist(Q_1',G_1)\right) \text{ and }
q< \max\left(\frac{r}{2n},\dist(Q_2,G_1)\right). $$
The first inequality yields
$q \geq \max(\frac{r}{2n},\dist(Q_1',G_1)) \geq \frac{r}{2n}$,
and then the second inequality yields
$q< \max(\frac{r}{2n},\dist(Q_2,G_1)) = \dist(Q_2,G_1)$.
Consequently, $\dist(Q_1',G_1) \leq q< \dist(Q_2,G_1)$.

Since $Q_1'$ and $Q_2$ intersect, and their diameters are at most $q$,
the triangle inequality yields $|\dist(Q_1',G_1)-\dist(Q_2,G_1)|\leq q$.
It follows that
$$ q< \dist(Q_2,G_1) \leq \dist(Q_1',G_1) + q \leq 2q. $$
Similarly, since $Q_1$ and $Q_2$ intersect,
$\dist(Q_1,G_1) \geq \dist(Q_2,G_1) -\diam(Q_1) > q-\diam(Q_1)$.
Combining with Lemma~\ref{lem:diam}, we get
$$ 3 \, \diam(Q_1) \geq \dist(Q_1,G_1) > q-\diam(Q_1), $$
that is, $\diam(Q_1)>q/4$. Finally, recall that the ratio between the diameters
of any two squares in $\Q$ is a power of 2. Therefore
$q/4<\diam(Q_1)<q$ yields $\diam(Q_1)=q/2$, as required.
\end{proof}

\section{Hitting Sets for Squares and Disks}\label{sec:hitting}

For the graph $G_1$ and the set of disks $S_3$, we define a hypergraph
$\G=(\Q,E)$, where the vertex set is the set $\Q$ of squares in the
stratified grid; and for every disk $D\in S_3$, the set of squares in $\Q$
that intersect $D$ forms a hyperedge in $E$. Thus, a subset $\H\subseteq \Q$
of squares is a \emph{hitting set} in the hypergraph $\G$ if and only if
every disk in $S_3$ intersects some square in $\H$.

For every hitting set $\H$, the geometric graph $\Gamma=\cup_{Q\in \H} \gamma(Q)$
intersects every disk in $S_3$ by Lemma~\ref{lem:gamma}, and $G_1\cup \Gamma$
is connected by construction. Let the \emph{weight} of a square $Q\in \Q$ be
$w(Q)=\diam(Q)$. In this section, we show that the minimum-weight hitting
set for $\G=(\Q,E)$ is a $O(1)$-approximation for $\opt(S_3,G_1)$. The
following technical lemma considers a single curve (i.e., a Jordan arc).
For a curve $\alpha$, let $\Q(\alpha)$ denote the set of squares in $\Q$ that
intersect $\alpha$. Refer to Fig.~\ref{fig:gamma}\,(right).

\begin{lemma}\label{lem:curve}
Let $\alpha$ be a directed polygonal curve whose start and end points lie on $G_1$.
If $\alpha$ intersects at least one disk in $S_3$, then
$\len(\alpha) = \Omega(\sum_{Q\in \Q(\alpha)} \diam(Q))$.
\end{lemma}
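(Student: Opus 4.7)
The plan is to show $\len(\alpha) = \Omega(\sum_{Q\in \Q(\alpha)}\diam(Q))$ by splitting $\Q(\alpha)$ according to the two stopping criteria of algorithm \textbf{Stratify} and bounding each contribution by $O(\len(\alpha))$. Let $\Q_A(\alpha)\subseteq \Q(\alpha)$ consist of the ``case A'' squares with $\diam(Q) < r/(2n)$, all of which live at the deepest level of the quadtree (and so have side-length $\Theta(r/n)$), and let $\Q_B(\alpha) := \Q(\alpha)\setminus \Q_A(\alpha)$ consist of the ``case B'' squares, which must satisfy $\diam(Q) < \dist(Q,G_1)$.

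For $\Q_A(\alpha)$ a direct packing argument suffices. The case A squares are interior-disjoint, axis-aligned, and uniform in size, so by counting grid-line crossings a polygonal curve of length $L:=\len(\alpha)$ intersects at most $O(Ln/r+1)$ of them, giving $\sum_{Q\in \Q_A(\alpha)}\diam(Q) = O(L+r/n)$. Since $\alpha$ hits some disk $D\in S_3$, which by~\eqref{eq:s3} lies at distance more than $r/n$ from $G_1$, we have $L \geq r/n$, and the additive term is absorbed.

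For $\Q_B(\alpha)$ I would use a charging scheme. For each $Q\in \Q_B(\alpha)$ pick any point $v_Q \in Q\cap \alpha$ and let $\beta_Q\subseteq \alpha$ be the subarc of arc length $\diam(Q)/8$ centered at $v_Q$. Since $Q$ is case B, $\dist(v_Q,G_1)\geq \dist(Q,G_1)>\diam(Q)$, so the arc-length distance from $v_Q$ to either endpoint of $\alpha$ (both on $G_1$) exceeds $\diam(Q) > \diam(Q)/16$; hence $\beta_Q$ is not clipped and $\len(\beta_Q) = \diam(Q)/8$. The desired inequality reduces to a uniform $O(1)$ bound on the overlap of the $\beta_Q$'s.

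The main obstacle is this overlap bound: a naive covering argument yields only $O(\log n)$, since cells of many sizes could \emph{a priori} be close to a given point $\alpha(s)$. To obtain $O(1)$, suppose $\alpha(s)\in \beta_Q$, so that $\dist(\alpha(s),Q)\leq \diam(Q)/16$. First, $\alpha(s)\notin R$ is impossible, since $Q\subset R$ and $\dist(Q,\partial R)\geq \dist(Q,G_1)>\diam(Q)$ (case B) would force $\dist(\alpha(s),Q)>\diam(Q)$. Hence $\alpha(s)$ lies in some cell $Q^*\in \Q$, and $\dist(Q,Q^*)\leq \diam(Q)/16$. Combining
$\diam(Q)<\dist(Q,G_1)\leq \dist(Q,Q^*)+\diam(Q^*)+\dist(Q^*,G_1)$
with $\dist(Q^*,G_1)\leq 3\diam(Q^*)$ from Lemma~\ref{lem:diam} gives $\diam(Q)=O(\diam(Q^*))$. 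Then Lemma~\ref{lem:adjacent} confines $Q$ to one of $O(1)$ cells of $\Q$ of diameter comparable to $Q^*$ within $O(\diam(Q^*))$ of $Q^*$. Integrating over $s$ yields $\sum_{Q\in \Q_B(\alpha)}\diam(Q)/8 = O(L)$, which combined with the case A bound completes the proof.
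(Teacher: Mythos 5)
Your decomposition into the uniform ``small'' cells (case A, handled by a grid-crossing count) and the ``far'' cells (case B, handled by charging disjoint subarcs of $\alpha$) is a genuinely different route from the paper's proof, which instead orders the cells of $\Q(\alpha)$ by first encounter, extracts a maximal pairwise-disjoint subfamily, and charges each consecutive pair of disjoint cells to the portion of $\alpha$ between their first-hit points. The case-A analysis and the non-clipping argument for $\beta_Q$ are correct. But the one step you yourself flag as the main obstacle --- the $O(1)$ bound on the overlap of the arcs $\beta_Q$ --- is not actually established by what you wrote. From $\dist(\alpha(s),Q)\le\diam(Q)/16$, $\diam(Q)<\dist(Q,G_1)$, and Lemma~\ref{lem:diam} applied to $Q^*$ you correctly get the \emph{upper} bound $\diam(Q)=O(\diam(Q^*))$, but you never get a matching \emph{lower} bound. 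Without it, the candidate cells $Q$ are only known to be small and near $\alpha(s)$, and summing over the $\Theta(\log n)$ possible levels gives exactly the $O(\log n)$ overlap you were trying to beat. The appeal to Lemma~\ref{lem:adjacent} does not close this: that lemma compares \emph{adjacent} cells, and $Q$ need not be adjacent to $Q^*$ (and a chain-of-adjacencies argument along the short segment joining them does not obviously terminate in $O(1)$ steps, since the segment can clip corners of many tiny intermediate cells).

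The gap is repairable with the tools at hand, by comparing every qualifying cell to the common quantity $\dist(\alpha(s),G_1)$ rather than to $Q^*$. If $x=\alpha(s)$ and $Q$ is a case-B cell with $\dist(x,Q)\le\diam(Q)/16$, then on one hand
$\diam(Q)<\dist(Q,G_1)\le\dist(Q,x)+\dist(x,G_1)\le\tfrac{1}{16}\diam(Q)+\dist(x,G_1)$,
so $\diam(Q)\le\tfrac{16}{15}\dist(x,G_1)$; on the other hand, Lemma~\ref{lem:diam} applied to $Q$ itself gives
$\dist(x,G_1)\le\dist(x,Q)+\diam(Q)+\dist(Q,G_1)\le\bigl(\tfrac{1}{16}+1+3\bigr)\diam(Q)$,
so $\diam(Q)\ge\tfrac{16}{65}\dist(x,G_1)$. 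Hence all cells charged at $x$ have diameters within a factor of $65/15$ of one another, so they occupy at most three quadtree levels, and at each level at most $O(1)$ grid cells lie within distance $\diam(Q)/16$ of $x$. With this substitution your proof is complete, and is arguably more self-contained than the paper's, at the price of the extra case split; the paper's disjoint-subsequence argument avoids any overlap analysis by making the charged arcs disjoint by construction.
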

\begin{proof}
By~\eqref{eq:s3}, we have $\dist(D,G_1)> \frac{r}{n}$ for every disk $D\in S_3$.
Since $\alpha$ intersects at least one disk in $S_3$, we have
$\len(\alpha)\geq \frac{2r}{n}$.

Let $A=(Q_0,Q_1,\ldots , Q_m)$ be the sequence of distinct squares
that intersect $\alpha$ in the order in which they are first
encountered by $\alpha$ (with no repetitions and ties broken arbitrarily).
Since $Q_0$ intersects $G_1$, we have $\diam(Q_0) < \max(\frac{r}{2n},0)=\frac{r}{2n}$,
and consequently
\begin{equation}\label{eq:alpha0}
\len(\alpha)\geq \frac{2r}{n} \geq 4 \, \diam(Q_0).
\end{equation}

Let $B=(Q_{\sigma(0)}, Q_{\sigma(1)},\ldots , Q_{\sigma(\ell)})$ be the
subsequence of $A$ such that $\sigma(0)=0$ and a square $Q_i$, $1\leq i\leq m$,
is added to $B$ if it is disjoint from $Q_j$ for all $0\leq j<i$.
By construction, $B$ consists of pairwise disjoint squares,
and every square in $A$ is either in $B$ or adjacent to some square in $B$.
By Lemma~\ref{lem:adjacent}, the sizes of adjacent squares in $\Q$ differ by a factor
of at most 2. Consequently, each square in $\Q$ is adjacent to
at most 12 squares in $\Q$ (at most two along each side and at most one at each corner).
It follows that
\begin{equation}\label{eq:alpha1}
\sum_{i=0}^m\diam(Q_i) = \Theta \left(\sum_{j=0}^{\ell} \diam(Q_{\sigma(j)})\right).
\end{equation}

For $j=0,\ldots , \ell$, let $p_{\sigma(j)}$ be the first intersection point of
$\alpha$ with $Q_{\sigma(j)}$. For two points $p,q\in \alpha$,
denote by $\alpha(p,q)$ the portion of $\alpha$ between $p$ and $q$.
Since the squares in $B$ are pairwise disjoint, and the sizes of adjacent squares
differ by at most a factor of 2 (Lemma~\ref{lem:adjacent}), we have
\begin{equation}\label{eq:alpha2}
\len\left(\alpha(p_{\sigma(j)},p_{\sigma(j+1)})\right)
\geq |p_{\sigma(j)}p_{\sigma(j+1)}|
\geq \frac{1}{2\sqrt{2}} \max\left(\diam(Q_{\sigma(j)}),\diam(Q_{\sigma(j+1)}) \right)\nonumber
\end{equation}
for $j=0,\ldots, \ell-1$. Consequently, if $\ell\geq 1$, we have
\begin{equation}\label{eq:alpha3}
\len(\alpha)
= \sum_{j=0}^{\ell-1} \len\left(\alpha(p_{\sigma(j)},p_{\sigma(j+1)})\right)
= \Omega\left(\sum_{j=0}^{\ell}\diam(Q_{\sigma(j)})\right).
\end{equation}
The combination of \eqref{eq:alpha0}, \eqref{eq:alpha1}, and \eqref{eq:alpha3}
yields $\len(\alpha) =\Omega(\sum_{i=0}^m \diam(Q_i))$, as required.
\end{proof}

\begin{lemma}\label{lem:approx}
If $\Gamma$ is a geometric graph such that $\Gamma$ intersects
every disk in $S_3$ and $G_1\cup \Gamma$ is connected, then there is a
hitting set $\H\subseteq \Q$ for $\G=(\Q,E)$ such that
\begin{equation}\label{eq:hit}
\len(\Gamma) =\Omega\left(\sum_{Q\in \H} \diam(Q)\right).
\end{equation}
\end{lemma}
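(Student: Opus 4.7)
The plan is to decompose $\Gamma$ into polygonal sub-curves each starting and ending on $G_1$, and apply Lemma~\ref{lem:curve} to every piece. Since $G_1\cup\Gamma$ is connected, every connected component of $\Gamma$ must meet $G_1$ (otherwise such a component would be separated from the rest of $G_1\cup\Gamma$ by positive distance). For each component $C$ of $\Gamma$, first subdivide edges if needed so that some vertex of $C$ lies on $G_1$, then double every edge of $C$ and take an Eulerian circuit $\beta$ in the resulting even-degree multigraph, starting at a vertex on $G_1$. The total length of all such circuits, summed over components of $\Gamma$, is $2\,\len(\Gamma)$.

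Parameterize each circuit as $\beta:[0,1]\to\RR^2$ and let $T=\{t\in[0,1]:\beta(t)\in G_1\}$, which is closed and contains $0$ and $1$. For every maximal open interval $(a,b)\subset[0,1]\setminus T$, the restriction $\alpha=\beta|_{[a,b]}$ is a polygonal curve whose two endpoints lie on $G_1$ and whose interior avoids $G_1$. Collecting these curves $\alpha_i$ across all components of $\Gamma$ yields a finite family with combined length at most $2\,\len(\Gamma)$, whose union together with $G_1$ covers all of $\Gamma$.

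Next, define
\[
\H \;=\; \bigcup\bigl\{\,\Q(\alpha_i)\;:\;\alpha_i\cap D\neq\emptyset \text{ for some } D\in S_3\,\bigr\},
\]
where $\Q(\alpha_i)$ denotes the set of stratified-grid squares meeting $\alpha_i$. To verify that $\H$ is a hitting set for $\G=(\Q,E)$, fix $D\in S_3$. Because $D\in S\setminus S_1$ and $S_1$ is exactly the set of disks meeting $G_1$, the disk $D$ is disjoint from $G_1$; hence every intersection point of $\Gamma$ with $D$ lies strictly inside some sub-curve $\alpha_i$. Any stratified-grid square that contains such a point lies in $\Q(\alpha_i)\subseteq\H$ and meets $D$, so $\H$ is indeed a hitting set. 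Summing Lemma~\ref{lem:curve} over those $\alpha_i$ that meet at least one disk of $S_3$ then yields
\[
\sum_{Q\in\H}\diam(Q)
\;\le\;\sum_i\,\sum_{Q\in\Q(\alpha_i)}\diam(Q)
\;=\;O\!\left(\sum_i \len(\alpha_i)\right)
\;=\;O(\len(\Gamma)),
\]
which is exactly the bound \eqref{eq:hit}.

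The main subtlety lies in the splitting step: one must accommodate Eulerian circuits whose intersection with $G_1$ is not just a finite point set (for example, when edges of $\Gamma$ happen to overlap $G_1$ on positive-length segments), and one must be sure that each disk in $S_3$ is pierced \emph{in the interior} of some $\alpha_i$ rather than only at an endpoint on $G_1$. The disjointness of $S_3$ from $G_1$, built into the partition $S=S_1\cup S_2\cup S_3$, is precisely what forces this; the remainder of the argument is bookkeeping that compares the local contributions $\diam(Q)$ to arc length $\len(\alpha_i)$ via Lemma~\ref{lem:curve}.
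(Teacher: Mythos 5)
Your proof is correct and follows essentially the same route as the paper's: double the edges of each component of $\Gamma$, take an Eulerian tour anchored on $G_1$, apply Lemma~\ref{lem:curve}, and sum. The extra step of splitting each tour at its visits to $G_1$ and keeping only the sub-arcs that actually meet a disk of $S_3$ is a harmless refinement that, if anything, handles the hypothesis of Lemma~\ref{lem:curve} (that the curve must intersect some disk of $S_3$) more carefully than the paper does.
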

\begin{proof}
Let $\H$ be the set of squares in $\Q$ that intersect $\Gamma$, and observe
that $\H$ is a hitting set for $\G$.
For each connected component $C$ of $\Gamma$, let $\alpha$ be a directed polygonal
curve that starts and ends at some points in $G_1\cap C$ and traverses every
edge of $C$ at least once and at most twice.
Then $\len(C) \geq \frac{1}{2}\len(\alpha)$,
and $\len(\alpha)=\Omega(\sum_{Q\in \Q(\alpha)}\diam(Q))$ by Lemma~\ref{lem:curve}.
Summation over all the components of $\Gamma$ yields~\eqref{eq:hit}.
\end{proof}

Recall that $\opt(S_3,G_1)$ is a geometric graph $\Gamma$ of minimum length
such that $G_1\cup \Gamma$ is connected and intersects every disk in $S_3$.
Let $W_0$ denote the minimum weight of a hitting set in the hypergraph $\G$.
The main result of this section is the following.

\begin{corollary}\label{cor:approx}
We have $W_0= O(\len(\opt(S_3,G_1)))$.
\end{corollary}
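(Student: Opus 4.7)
The plan is to derive this as an almost immediate consequence of Lemma~\ref{lem:approx}. Take $\Gamma^\star = \opt(S_3, G_1)$ to be the optimal connecting graph whose existence is guaranteed by definition: $\Gamma^\star$ intersects every disk in $S_3$ and $G_1 \cup \Gamma^\star$ is connected. Feeding $\Gamma^\star$ into Lemma~\ref{lem:approx} produces a specific hitting set $\H \subseteq \Q$ of the hypergraph $\G = (\Q, E)$ satisfying
\[
\sum_{Q \in \H} \diam(Q) = O\!\left(\len(\Gamma^\star)\right) = O\!\left(\len(\opt(S_3, G_1))\right).
\]
Since the weights in $\G$ are defined by $w(Q) = \diam(Q)$ and $W_0$ is by definition the minimum total weight over all hitting sets, we conclude
\[
W_0 \;\le\; \sum_{Q \in \H} w(Q) \;=\; \sum_{Q \in \H} \diam(Q) \;=\; O\!\left(\len(\opt(S_3, G_1))\right),
\]
which is the desired bound.

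There is really no obstacle to overcome at this stage: all the geometric work has already been carried out in Lemma~\ref{lem:approx}, which in turn rests on the charging argument of Lemma~\ref{lem:curve} together with the adjacency bound of Lemma~\ref{lem:adjacent}. The role of this corollary is simply to package those ingredients into a clean statement that can be combined with Theorem~\ref{thm:chan}: an $O(1)$-approximate minimum weight hitting set for $\G$ has weight $O(W_0) = O(\len(\opt(S_3, G_1)))$, and its associated union of boundary-plus-stem graphs $\bigcup_{Q \in \H} \gamma(Q)$ then serves as the graph $G_3$ needed in the algorithm outline, with total length $O(\len(\opt))$ by Corollary~\ref{cor:gamma}.
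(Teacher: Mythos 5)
Your proof is correct and follows exactly the paper's argument: apply Lemma~\ref{lem:approx} to $\Gamma=\opt(S_3,G_1)$ to obtain a hitting set $\H$ with $\sum_{Q\in\H}\diam(Q)=O(\len(\opt(S_3,G_1)))$, and then observe that $W_0$ is at most this weight by minimality. Nothing is missing.
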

\begin{proof}
Invoke Lemma~\ref{lem:approx} with $\Gamma=\opt(S_3,G_1)$.
Then $\G$ has a hitting set $\H\subset \Q$ of weight
$\sum_{Q\in  \H}\diam(Q) = O(\len(\Gamma))=O(\len(\opt(S_3,G_1)))$.
This is clearly an upper bound on the minimum weight $W_0$ of a hitting set in $\G$.
\end{proof}

\section{Hitting Sets for Points and Disks}\label{sec:chan}

In Section~\ref{sec:hitting} we defined a hypergraph $\G=(\Q,E)$ for
squares and disks; that is, the vertices are squares in $\Q$ and the
hyperedges are the squares intersecting a disk in $S_3$. In order to
apply Theorem~\ref{thm:chan} by Chan~\etal~\cite{CGKS12}, we reduce
the problem to a traditional geometric hypergraph problem, where the
vertices are points in $\RR^2$ and a hyperedge corresponds to the set of points
contained in a disk $D\in S_3$.

For each square $Q\in \Q$, we define a set of 25 \emph{sentinel}
points, and show (Lemma~\ref{lem:sentinel}) that if a disk $D\in S_3$
intersects $Q$, then $D$ contains one of the sentinel points of $Q$.
A constant number of sentinels suffice if none of the disks intersecting $Q$
is too small, and indeed, Lemma~\ref{lem:gamma} has shown that this is the case.

For a square $Q\in \Q$, where $Q=[a,a+h]\times [b,b+h]$, let the 25 sentinel points be
$(a+ih/2,b+jh/2)$ for all $i,j\in\{-1,0,1,2,3\}$; see Fig.~\ref{fig:sentinel}\,(left).

\begin{figure}[htbp]
\centering
\includegraphics[width=0.75\textwidth]{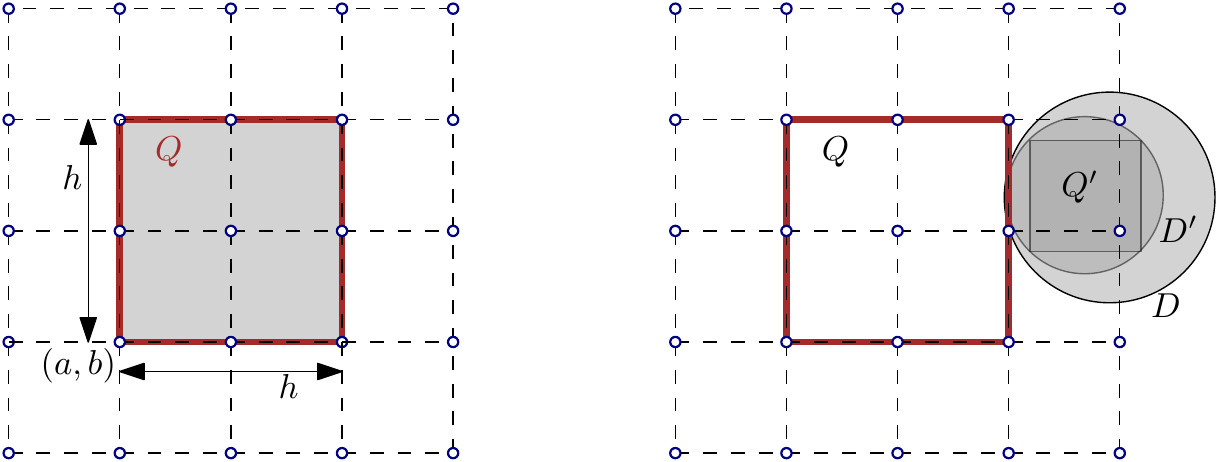}
\caption{Left: the set of 25 sentinel points for a square $Q$.
Right: a disk $D\in S_3$ intersects a square $Q\in \Q$.}
\label{fig:sentinel}
\end{figure}

\begin{lemma}\label{lem:sentinel}
If a disk $D\in S_3$ intersects a square $Q\in \Q$, then $D$ contains
a sentinel point corresponding to $Q$.
\end{lemma}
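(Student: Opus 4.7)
The plan is to reduce the lemma to showing that the center $c$ of $D$ lies within its radius $\rho$ of at least one of the $25$ sentinels, which is equivalent to $D$ containing that sentinel. The essential input is Lemma~\ref{lem:gamma}(i), which gives $\diam(Q)\leq 2\diam(D)$; writing $h$ for the side-length of $Q$, this becomes $h\sqrt{2}\leq 4\rho$, so $\rho\geq h\sqrt{2}/4$. This is exactly the covering radius of the $5\times 5$ sentinel grid (which has spacing $h/2$), so the radius lower bound gives $D$ just enough room to reach a grid point.

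Next I would exploit symmetry: the sentinel set and $Q$ are jointly invariant under reflections about their common center $(a+h/2,b+h/2)$, so after reflecting if necessary I may assume $c=(c_x,c_y)$ satisfies $c_x\leq a+h/2$ and $c_y\leq b+h/2$. Set $d_x=\max(0,a-c_x)$ and $d_y=\max(0,b-c_y)$, so the hypothesis $D\cap Q\neq\emptyset$ becomes $d_x^{2}+d_y^{2}\leq\rho^{2}$. Let $e_x$ and $e_y$ denote the distances from $c_x$ and $c_y$ to the nearest sentinel $x$- and $y$-coordinates, respectively; the nearest sentinel to $c$ then lies at distance $\sqrt{e_x^{2}+e_y^{2}}$, and the goal becomes $e_x^{2}+e_y^{2}\leq\rho^{2}$.

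I would then split into three cases, organized by how $c$ sits relative to the extended square $T=[a-h/2,a+3h/2]\times[b-h/2,b+3h/2]$ spanned by the sentinels. (i)~If $c\in T$, then $e_x,e_y\leq h/4$, so $\sqrt{e_x^{2}+e_y^{2}}\leq h\sqrt{2}/4\leq\rho$. (ii)~If exactly one of $c_x,c_y$ lies outside $T$, say $c_x<a-h/2$ (the other case is symmetric under swapping $x$ and $y$), then $e_x=d_x-h/2$ and $e_y\leq h/4$; a short expansion yields $e_x^{2}+e_y^{2}\leq d_x^{2}-d_xh+5h^{2}/16$, and combining $d_x>h/2>5h/16$ with $d_x^{2}\leq\rho^{2}$ gives $e_x^{2}+e_y^{2}\leq\rho^{2}$. (iii)~If both coordinates of $c$ lie outside $T$, then $e_x=d_x-h/2$ and $e_y=d_y-h/2$ with $d_x,d_y>h/2$, and expanding gives $e_x^{2}+e_y^{2}=(d_x^{2}+d_y^{2})-h(d_x+d_y)+h^{2}/2$; since $d_x+d_y>h\geq h/2$, this is at most $d_x^{2}+d_y^{2}\leq\rho^{2}$.

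The subtlest step is case (ii): moving from $c$ toward the nearest sentinel saves $h/2$ of distance in one coordinate but introduces up to $h/4$ of error in the other, and one must check algebraically that this trade stays within the budget $\rho$. This is where the precise lower bound $\rho\geq h\sqrt{2}/4$ coming from Lemma~\ref{lem:gamma}(i), the sentinel spacing $h/2$, and the one-step extension of the sentinel grid beyond $Q$ are all used in a tight way; weakening any of these parameters would cause the case analysis to fail.
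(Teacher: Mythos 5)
Your proof is correct, and it takes a genuinely different (more computational) route than the paper. The paper argues geometrically: it shrinks $D$ about a point of $D\cap Q$ to a disk $D'$ with $\diam(D')=\tfrac12\diam(Q)$ that still meets $Q$, inscribes in $D'$ an axis-aligned square $Q'$ of side $h/2$, and observes that $Q'$ must contain a point of the $(h/2)$-spaced sentinel lattice, which extends far enough beyond $Q$ to catch it. You instead bound the distance from the center $c$ of $D$ to the nearest sentinel directly, using the product structure of the sentinel grid (nearest $x$-coordinate and nearest $y$-coordinate can be chosen independently), a reflection symmetry to normalize $c$, and a three-case analysis according to which coordinates of $c$ fall outside the extended square $T$. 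Both proofs hinge on the same input, Lemma~\ref{lem:gamma}(i), i.e., $\rho\geq h\sqrt{2}/4$; in your argument this bound is only needed in case (i), while cases (ii) and (iii) follow from $d_x^2+d_y^2\leq\rho^2$ alone. I checked your algebra in cases (ii) and (iii) and it is right ($d_x>h/2\geq 5h/16$ and $d_x+d_y>h\geq h/2$ respectively). What your approach buys is full explicitness: every constant is verified, including the fact that the $5\times 5$ grid extends far enough, which the paper's proof asserts somewhat tersely via the bound $\dist(Q',Q)\leq(\sqrt2-1)h$. What the paper's approach buys is brevity and a reusable geometric idea (shrink-and-inscribe) that avoids coordinates entirely.
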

\begin{proof}
Assume that a disk $D\in S_3$ intersects a square $Q\in \Q$ of side
length $h$; refer to Fig.~\ref{fig:sentinel}\,(right).
By Lemma~\ref{lem:gamma}, $\diam(Q) \leq 2\, \diam(D)$. By scaling down $D$
from an arbitrary center in $D\cap Q$, we find a disk $D'$ intersecting $Q$ with
$\diam(D')=\frac{1}{2}\, \diam(Q)$. The inscribed axis-aligned square
$Q'$ of $D'$ has $\diam(Q')=\frac{1}{2}\, \diam(Q)$.
That is, the side-length of $Q'$ is $h/2$, and $\dist(Q',Q)\leq (\sqrt{2}-1)h$.
Since the sentinels of $Q$ form a section of a square lattice of (the
same) side-length $h/2$, within distance $\frac{\sqrt2}{2}\, h$ from $Q$,
some sentinel of $Q$ lies in $Q'$, and hence in $D'\subseteq D$, as claimed.
\end{proof}

We define a new weighted hypergraph $\G'=(V',E')$, where $V'$ is the
union of sentinel point sets for all $Q\in \Q$ that lie in $R$
(sentinels in the exterior of $R$ are discarded);
and each hyperedge in $E'$ is the set of sentinels in $V'$ contained in a disk $D\in S_3$.
Note that a sentinel $s\in V'$ may correspond to several squares in $\Q$.
Let the \emph{weight} of a sentinel $s\in V'$ be the sum of the diameters of the
squares $Q\in \Q$ that correspond to $s$. Hence the total weight of all sentinels
is at most $25\sum_{Q\in \Q} \diam(Q)$. We next derive a bound on the weight of
each sentinel.

\begin{lemma}\label{lem:weight}
For every $Q\in \Q$, the weight of every sentinel corresponding to $Q$ is $O(\diam(Q))$.
\end{lemma}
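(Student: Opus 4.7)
Fix $Q \in \Q$ of side $h$ (so $\diam(Q) = h\sqrt{2}$) and a sentinel $s$ corresponding to $Q$. Writing
\[
w(s) \;=\; \sum_{Q' \in \Q :\ s \text{ corresponds to } Q'} \diam(Q'),
\]
the plan is to group the terms by the side length $h'$ of $Q'$ and prove two estimates: (i) every $Q'$ contributing to the sum satisfies $h' = O(h)$, and (ii) for each fixed value of $h'$, at most $O(1)$ squares of $\Q$ of that side correspond to $s$. Since the side lengths of squares in $\Q$ form a geometric progression with ratio $2$, summing over the occurring values of $h'$ then yields $w(s) = O(h'_{\max}) = O(h) = O(\diam(Q))$.

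For (i), exactly as in the proof of Lemma~\ref{lem:sentinel}, every sentinel of a square of side $\sigma$ lies within distance $\tfrac{\sqrt{2}}{2}\,\sigma$ of that square; in particular $\dist(s,Q) \leq \tfrac{\sqrt{2}}{2}\,h$ and $\dist(s,Q') \leq \tfrac{\sqrt{2}}{2}\,h'$. Combining $\dist(s,Q) \leq \tfrac{\sqrt{2}}{2}\,h$, $\diam(Q) = h\sqrt{2}$, and Lemma~\ref{lem:diam}'s bound $\dist(Q,G_1) \leq 3\,\diam(Q)$ via the triangle inequality, I get $\dist(s,G_1) \leq \tfrac{9\sqrt{2}}{2}\,h$. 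Now $Q'$ is a leaf of the stratified grid, so the non-subdivision rule forces $\diam(Q') < r/(2n)$ or $\diam(Q') < \dist(Q',G_1)$. In the first case, combined with the universal lower bound $\diam(Q) \geq r/(4n)$, this immediately gives $h' < 2h$. In the second case, I apply
\[
h'\sqrt{2} \;=\; \diam(Q') \;<\; \dist(Q',G_1) \;\leq\; \dist(s,Q') + \dist(s,G_1) \;\leq\; \tfrac{\sqrt{2}}{2}\,h' + \tfrac{9\sqrt{2}}{2}\,h,
\]
whose $h'$ terms on the two sides cancel to yield $h' < 9h$. Either way, $h' = O(h)$.

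For (ii), the $25$ candidate lower-left corners of a square of side $h'$ corresponding to $s$ are $(s_x - i'h'/2,\, s_y - j'h'/2)$ with $i',j' \in \{-1,0,1,2,3\}$; two such candidates yield interior-disjoint squares only when $|i'_1 - i'_2| \geq 2$ or $|j'_1 - j'_2| \geq 2$, so any family of pairwise interior-disjoint candidates has cardinality at most $9$ (e.g., the $3 \times 3$ grid indexed by $i',j' \in \{-1,1,3\}$). Since $\Q$ is a partition of $R$, at most $9$ squares of $\Q$ of side $h'$ correspond to $s$. Writing the occurring values of $h'$ as a geometric sequence $h'_{\max}, h'_{\max}/2, \ldots$ with $h'_{\max} \leq 9h$ then gives $\sum_{h'} h' \leq 2h'_{\max} = O(h)$, finishing the proof. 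The main obstacle is the telescoping in the second case of step~(i): without first bounding $\dist(s,G_1)$ in terms of $h$ (and not $h'$) via Lemma~\ref{lem:diam}, the non-subdivision inequality for $Q'$ only compares $h'$ with itself and yields no constraint.
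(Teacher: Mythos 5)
Your proof is correct, but it follows a genuinely different route from the paper's. The paper argues locally via Lemma~\ref{lem:adjacent}: a sentinel of a square $Q'$ lies in $Q'$ or in a square of $\Q$ adjacent to $Q'$ (since neighbors have side at least half that of $Q'$), so the squares corresponding to a fixed sentinel $s$ are among the at most four squares of $\Q$ containing $s$ together with their $O(1)$ neighbors, and Lemma~\ref{lem:adjacent} forces all of these to have side length within a constant factor of one another; the sum of their diameters is then trivially $O(\diam(Q))$. You bypass Lemma~\ref{lem:adjacent} entirely: you bound $\dist(s,G_1)=O(h)$ via Lemma~\ref{lem:diam}, feed that back into the non-subdivision rule of the stratification to obtain the upper bound $h'<9h$ on the side of any square corresponding to $s$, and control the multiplicity within each size class by a packing argument (interior-disjointness of the realized candidates among the $25$ possible placements allows at most $9$ squares of a given side), closing with a geometric series. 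Both arguments are sound. The paper's is shorter, but leans on the somewhat informal claim that $s$ corresponds to $O(1)$ squares of comparable size; your version buys explicit constants, a crisper justification of that counting claim, and robustness --- the geometric series absorbs arbitrarily small corresponding squares, so you never need a lower bound on their sizes --- at the cost of re-deriving a distance bound that the paper gets for free from adjacency.
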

\begin{proof}
By Lemma~\ref{lem:adjacent}, the side-lengths of adjacent squares of the stratified grid differ
by at most a factor of $2$. Consequently, every sentinel in $V'$ corresponding to a square $Q\in \Q$
is contained in $Q$ or in a square of $\Q$ adjacent to $Q$.

Let $s\in V'$ be a sentinel.
Then $s$ may correspond to all squares in $\Q$ that contain $s$, and to adjacent squares in $\Q$.
Every point is contained in at most $4$ squares of $\Q$, whose side-lengths differ by a factor
of at most $2$; and they are each adjacent to $O(1)$ additional squares whose side-lengths
differ by another factor of at most $2$. Overall, $s$ corresponds to $O(1)$ squares in $\Q$
whose side-lengths differ by a factor of $\Theta(1)$.
Therefore, the weight of $s$ is $O(\diam(Q))$ for every square $Q\in Q$ corresponding to $s$.
\end{proof}

By Theorem~\ref{thm:chan}, there is a polynomial-time
$O(1)$-approximation algorithm for MWHS on $\G'=(V',E')$. It remains to
show that a $O(1)$-approximation for MWHS on the hypergraph $\G'=(V',E')$ provides
a $O(1)$-approximation for MWHS on the hypergraph $\G=(\Q,E)$.

\begin{lemma}\label{lem:prime}
\begin{enumerate}\itemsep 0pt
\item For every hitting set $\H\subseteq \Q$ for $\G$, the set
  $H'$ of sentinels in $V'$ corresponding to the squares $Q\in \H$ is
  a hitting set for $\G'$ of weight $O(\sum_{Q\in \H} \diam(Q))$.
\item For every hitting set $H'\subseteq V'$ for $\G'$, the set $\H$ of
  squares $Q\in \Q$ that contain the sentinel points in $H'$ is a hitting set
  for $\G$ of weight $O(\sum_{s\in H'} w(s))$.
\end{enumerate}
\end{lemma}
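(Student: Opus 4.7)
I plan to prove the two parts of Lemma~\ref{lem:prime} separately, reducing each to a combination of Lemma~\ref{lem:sentinel}, Lemma~\ref{lem:weight}, and the balanced structure of the stratified grid expressed by Lemma~\ref{lem:adjacent}.

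For part (1), the set $H'$ in the statement consists of all sentinels in $V'$ that arise as sentinels of some $Q\in \H$. Hitting is immediate from Lemma~\ref{lem:sentinel}: given $D\in S_3$, pick $Q\in \H$ intersecting $D$; the lemma supplies a sentinel of $Q$ lying inside $D$, and since the disks of $S_3$ lie in $R$ this sentinel is in $V'$, hence in $H'$. For the weight bound I will assign each $s\in H'$ to some $Q\in \H$ for which $s$ is a sentinel; each $Q$ receives at most $25$ sentinels, and by Lemma~\ref{lem:weight} each such sentinel has weight $O(\diam(Q))$, so $\sum_{s\in H'} w(s) = O\bigl(\sum_{Q\in \H} \diam(Q)\bigr)$.

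For part (2), the set $\H$ consists of the squares of $\Q$ that contain (as planar sets) at least one sentinel of $H'$. Hitting follows from the tiling property of $\Q$: for any $D\in S_3$ there is some $s\in H'\cap D$, and because $s\in V'\subseteq R$ it lies in some $Q\in \Q$; that $Q$ is in $\H$ and witnesses $Q\cap D\neq \emptyset$.

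The weight bound in part (2) is where I expect the main difficulty. My plan is to pick, for each $Q\in \H$, a representative $s_Q\in H'\cap Q$ and charge $\diam(Q)$ to $w(s_Q)$ via the inequality $\diam(Q)=O(w(s_Q))$. I will derive this from the geometric fact already implicit in the proof of Lemma~\ref{lem:weight}: every sentinel of a square $Q'\in \Q$ lies in $Q'$ itself or in a square of $\Q$ adjacent to $Q'$, which in turn is a consequence of Lemma~\ref{lem:adjacent}. Read contrapositively, any $Q'\in \Q$ whose sentinel set contains $s_Q$ must equal $Q$ or be adjacent to it, and then Lemma~\ref{lem:adjacent} yields $\diam(Q)\leq 2\,\diam(Q')\leq 2\,w(s_Q)$, noting that $s_Q$ corresponds to at least one such $Q'$ because $s_Q\in V'$. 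To close, the squares of $\Q$ are pairwise interior-disjoint, so at most $4$ of them contain any given point; hence at most $4$ squares of $\H$ share the same representative, and
\[
\sum_{Q\in \H} \diam(Q)
\;\leq\; \sum_{s\in H'}\ \sum_{\substack{Q\in \H\\ s_Q=s}} \diam(Q)
\;\leq\; 8 \sum_{s\in H'} w(s)
\]
establishes the required bound.
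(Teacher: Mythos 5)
Your proof is correct, and for part (1) and the hitting direction of part (2) it coincides with the paper's. The one substantive difference is the weight bound in part (2). The paper disposes of it in one line by asserting that the weight $w(s)$ of a sentinel already includes the diameters of \emph{all} squares of $\Q$ that contain $s$ --- i.e., that every square of $\Q$ containing the point $s$ has $s$ among its own $25$ sentinels --- which gives $\sum_{Q\in\H}\diam(Q)\le \sum_{s\in H'}w(s)$ outright. That assertion is delicate (a larger square adjacent to $Q_0$ can contain a sentinel of $Q_0$ without that point lying on its own half-side lattice, so it need not ``correspond'' to $s$), and your charging argument --- pick a representative $s_Q\in H'\cap Q$, bound $\diam(Q)=O(w(s_Q))$ via Lemma~\ref{lem:adjacent}, and control the multiplicity by the fact that at most $4$ interior-disjoint squares contain any given point --- sidesteps it entirely and is the more robust route; it is what one would write to make the paper's one-liner airtight. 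One small slip in your part (2): from ``$s_Q$ lies in $Q'$ or in a square adjacent to $Q'$'' you cannot conclude that the square $Q\ni s_Q$ equals $Q'$ or is adjacent to $Q'$; if $s_Q$ sits on the common boundary of $Q$ and some neighbor $Q''$ of $Q'$, then $Q$ may be adjacent only to $Q''$, i.e., two adjacency steps from $Q'$. Lemma~\ref{lem:adjacent} then gives $\diam(Q)\le 4\,\diam(Q')\le 4\,w(s_Q)$ rather than $2\,w(s_Q)$, so your final constant becomes $16$ instead of $8$; the $O(\cdot)$ conclusion is unaffected. You also justify, as the paper does not, that the sentinel produced by Lemma~\ref{lem:sentinel} lies in $R$ (because every disk of $S_3$ avoids $\partial R\subseteq G_1$ and hence lies in $R^{\circ}$) and so survives the discarding step in the definition of $V'$.
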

\begin{proof}
(1) If $\H$ is a hitting set for $\G$, then every disk $D\in S_3$
    intersects some square $Q\in \H$. By Lemma~\ref{lem:sentinel},
    $D$ contains one of the sentinels of $Q$.
    Consequently, every disk $D\in S_3$ contains a sentinel in $H'$.
    Every square $Q\in \H$ corresponds to 25 sentinels, each of weight $O(\diam(Q)$
    by Lemma~\ref{lem:weight}. The weight of $H'$ is $O(\sum_{Q\in \H} \diam(Q))$.

(2) If $H'$ is a hitting set for $\G'$, then every disk $D\in S_3$
    contains some point $s\in H'$.
    The point $s$ lies in a square $Q\in \Q$ of the stratified grid, which
    is in $\H$. Consequently, every disk $D\in S_3$ intersects some square
    $Q\in \H$. By construction, the weight of each sentinel $s$ is the sum of
    weights of the corresponding squares in $\Q$, including all squares
    in $\Q$ that contain $s$. Therefore, the weight of $\H$ is at most
    $\sum_{s\in H'}w(s)$, as required.
\end{proof}

We are now ready to prove Theorem~\ref{thm:disk} by
analyzing the constructed graph $G=G_1\cup G_2\cup G_3$.

\begin{proof}[Proof of Theorem~\ref{thm:disk}.]
Let $S$ be a set of $n$ disks in $\RR^2$. Compute an independent set
$I\subset S$ as described in Section~\ref{sec:pre},
and a TSP tour $\xi_0$ for $I$ with $\len(\xi_0)=O(\len(\opt))$
(as in~\cite{BGK+05} or~\cite{EFS09}). Compute the graph $G_1$ with
$\len(G_1)=O(\len(\opt))$, and the partition $S=S_1\cup S_2\cup S_3$
as described in Section~\ref{sec:pre}.
The graph $G_1$ intersects the disks in $S_1$.

Construct the graph $G_2$, which contains a shortest segment between
$G_1$ and every disk $D\in S_2$.
The length of this graph is $\len(G_2)=\sum_{D\in S_2}
\dist(D,G_1)\leq |S_2|\cdot \frac{r}{n}\leq r\leq \len(\opt)$.

Compute the stratified grid $\Q$, and construct the weighted hypergraph $\G'=(V',E')$,
where $V'$ is the set of sentinel points for all squares $Q\in \Q$, the weight of
a sentinel $s$ is the sum of diameters of the corresponding squares $Q\in \Q$,
and for every disk $D\in S_3$, the set of sentinels lying in $D$ forms
a hyperedge in $E'$. Use the algorithm by Chan~\etal~\cite{CGKS12}
to compute a hitting set $H'$ for $\G'$ whose weight is $O(1)$ times the minimum.
Let $\H$ be the set of squares in $\Q$ containing the sentinels
in $H'$. By Lemma~\ref{lem:prime}, $\H$ is a hitting set for the
hypergraph $\G=(\Q,E)$ whose weight is at most $O(1)$ times the minimum
$W_0$. Put $G_3= \cup_{Q\in \H}\gamma(Q)$.
Then $G_3$ intersects every disk in $S_3$ by Lemma~\ref{lem:gamma}, and
$\len(G_3)=\Theta(\sum_{Q\in \H}\diam(Q))=\Theta(W_0)$ by Corollary~\ref{cor:gamma}.
Finally, Corollary~\ref{cor:approx} yields
$\len(G_3)=O(\len(\opt(S_3,G_1)))$.
By the definition of $\opt(S_3,G_1))$, we have
$\len(\opt(S_3,G_1)) \leq \len(\opt(S_3)) \leq \len(\opt)$,
and consequently $\len(G_3)=O(\len(\opt))$.

Note that the graph $G=G_1\cup G_2\cup G_3$ is connected by
construction, it intersects every disk in $S=S_1\cup S_2\cup S_3$,
and $\len(G)=\len(G_1)+\len(G_2)+\len(G_3)=O(\len(\opt))$.
Consequently, an Eulerian tour of the multi-graph containing each edge of
$G$ twice visits each disk and its length is $2 \, \len(G) = O(\len(\opt))$,
as required.

Since the above steps as well as algorithm {\bf Stratify}$(R,G_1)$
all run in time that is polynomial in $n$, the constant-factor approximation algorithm
for TSP with disks runs in polynomial time.
\end{proof}

\section{Conclusions}  \label{sec:conclusion}

In this paper, we obtained the first constant-ratio approximation for TSP
with disks in the plane. This is the first result of this kind for a class
of planar convex bodies of arbitrary size that can intersect in an arbitrary fashion.
In light of the connection we established between TSPN and MWHS in geometric hypergraphs,
the following question emerges:

\begin{enumerate} \itemsep 0pt

\item Besides regions of linear union complexity (e.g., disks and
  pseudo-disks\footnote{A set of regions $\R$ consists of \emph{pseudo-disks},
    if every pair of regions $\omega_1,\omega_2 \in \R$
     satisfies the \emph{pseudo-disk property}: the sets $\omega_1
     \setminus \omega_2$ and $\omega_2 \setminus \omega_1$
are connected~\cite[p.~293]{BCK+08}. Equivalently, the boundaries $\partial \omega_1$
and $\partial \omega_2$ have at most two proper intersection points.}),
what other types of regions admit a constant-factor approximation for
the minimum weight hitting set problem?

\end{enumerate}

Obviously, a constant-factor approximation for MWHS with a certain type of
neighborhoods does not automatically imply a constant-factor approximation
for TSPN with the same type of neighborhoods.
We conclude with a few, perhaps the simplest still unsolved
questions on TSPN that we could identify:

\begin{enumerate} \itemsep 0pt
  \setcounter{enumi}{1}

 \item Is there a constant-factor approximation algorithm for TSP with a set
   of objects of linear union complexity, \eg,  pseudo-disks?

\item Is there a constant-factor approximation algorithm for TSP with
  convex bodies in the plane?\footnote{Very recently, Mitchell~\cite{Mi16b}
  proposed a constant-factor approximation algorithm for this variant.
  However, no complete proof is available at the time of this writing.
  In fact, we believe that TSP with planar convex bodies is much harder to approximate
  than TSP with disks.}

\item Is TSP with disks in the plane APX-hard? Is TSP with convex bodies in the plane
  APX-hard?

\item Is there a constant-factor approximation algorithm for TSP with balls
 (with arbitrary radii and intersections) in $\RR^d$, in fixed dimension $d\geq 3$?

\end{enumerate}

\end{document}